\newcommand{\bea}{\begin{eqnarray}}
\newcommand{\eea}{\end{eqnarray}}
\newtheorem{thm}{Theorem}
\newcommand{\bfl}{\begin{flushleft}}
\newcommand{\efl}{\end{flushleft}}
\newcommand{\bfr}{\begin{flushright}}
\newcommand{\efr}{\end{flushright}}
\newcommand{\bc}{\begin{center}}
\newcommand{\ec}{\end{center}}
\newcommand{\edar}{\end{array}}
\newcommand{\begar}{\begin{array}}
\newcommand{\bit}{\begin{itemize}}
\newcommand{\eit}{\end{itemize}}
\newcommand{\beq}{\begin{equation}}
\newcommand{\eeq}{\end{equation}}
\newcommand{\ben}{\begin{enumerate}}
\newcommand{\een}{\end{enumerate}}
\newcommand{\bean}{\begin{eqnarray*}}
\newcommand{\eean}{\end{eqnarray*}}
\newtheorem{cor}{Corollary}
\newcommand{\IsTR}{\ifthenelse{1<2}}
\begin{document}

\title{Collaborative High Accuracy Localization in Mobile Multipath Environments}

\author{ \IEEEauthorblockN{Venkatesan. N. Ekambaram*, Kannan Ramchandran* and Raja Sengupta**}\\
\IEEEauthorblockA{*Department of EECS, University of California, Berkeley\\
**Department of CEE, University of California, Berkeley\\
Email: \{venkyne, kannanr\}@eecs.berkeley.edu}, sengupta@ce.berkeley.edu}

\maketitle

\begin{abstract}

We study the problem of high accuracy localization of mobile nodes in a multipath-rich environment where sub-meter accuracies are required. We employ a peer-to-peer framework where the vehicles/nodes can get pairwise multipath-degraded ranging estimates in local neighborhoods. The challenge is to overcome the multipath-barrier using redundancy in measurements across time and space enabled through cooperation, in order to provide the desired accuracies especially under severe multipath conditions when the fraction of received signals corrupted by multipath is dominating.  We invoke an analytical {\em graphical} model framework based on \emph{particle filtering} and validate its potential for high accuracy localization through simulations. We also address design questions such as ``How many anchors and what fraction of line-of-sight (LOS) measurements are needed to achieve a specified target accuracy?", by showing that the Cramer-Rao Lower Bound for localization can be expressed as a product of two factors - a scalar function that depends only on the parameters of the {\em noise} distribution and a matrix that depends only on the {\em geometry} of node locations and the underlying {\em connectivity graph}.  A simplified expression is obtained for the CRLB,  which provides an insightful understanding of the bound and helps deduce the scaling behavior  of the estimation error as a function of the number of agents and anchors in the network.\\

 {\bf {\em Keywords}} -  Localization, Multipath, Graphical Models, Particle Filtering, Cramer Rao Bound.
\end{abstract}

\IEEEpeerreviewmaketitle

\section{Introduction}
High-accuracy localization is mandated in many applications like vehicle safety  \cite{wilson1998potential},  autonomous robotic systems \cite{nerurkar2009distributed},   Unmanned Air Vehicle (UAV) systems etc, where sub-meter accuracies are needed. Standard GPS receivers can have localization errors of over fifty or more meters which is unacceptable for many of these applications. The principal problem is multipath interference \cite{chen1999non}, which is particularly prevalent in cities and ``urban canyon'' environments, and corrupts a large fraction of the measurements. Many of the existing solutions such as D-GPS, A-GPS, N-RTK \cite{rizos2002network} etc, that augment the GPS system are typically expensive and further fail to address multipath. Our goal here is to design algorithms that can provide sub-meter accuracies in severe multipath environments with minimal communication overhead given the bandwidth constraint in applications such as vehicular safety enabled by technology like DSRC (Dedicated Short Range Communication) \cite{kenney2011dedicated}. Our design philosophy is a ``peer to peer'' architecture (see Fig \ref{fig:vanet}) where nodes collaborate and help each other refine their position estimates. Collaboration coupled with mobility generates a large pool of measurements in the system. The fundamental insight is that, some fraction of these measurements will be produced by line-of-sight (LOS) dominated signals, and hence be fairly accurate, while some fraction will be corrupted by dominated non-line-of-sight (NLOS) reflected waves.  Receivers do not know a priori which measurements are LOS and which are NLOS. Hence, the task of the users is to cooperatively discard the NLOS signals, thus enabling them to compute high-precision position estimates. Our first main contribution in this paper is to uncover a framework and a distributed algorithm founded on \emph{graphical models} for {\em collaborative narrowband NLOS} localization with {\em minimal messaging} overhead. The proposed algorithm shows promise of sub-meter accuracies in a realistic simulation setup.\\

\begin{figure}
\centering
\includegraphics[height=1.5in]{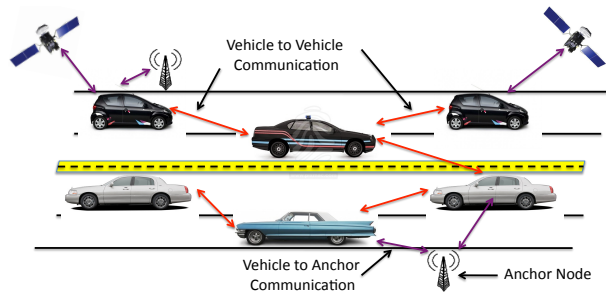}
\caption{ Peer-to-Peer Collaborative Localization}
\label{fig:vanet}
\vspace*{0in} \end{figure}

Our second main contribution in this paper is a theoretical characterization of the localization accuracy as a function of the number of {\em anchors} (nodes with known locations), agents/vehicles and the fraction of LOS measurements by analyzing the Cramer-Rao Lower Bound (CRLB) \footnote{The CRLB is mostly tight in the high signal-to-noise (SNR) ratio regime and is not a good indicator of the estimator performance at low SNR. Other bounds like the Ziv-Zakai and Barankin bounds give a good characterization at low SNR but are quite complex to offer any insights.}, a fundamental lower bound on the best possible mean square error achievable using an unbiased estimator. For a generalized distance/angle measurement model,  we provide a simplified characterization of the scaling behavior of the bound as a function of the number of nodes in the network and the fraction of LOS measurements. Further we show a ``separation principle''  under simplifying assumptions, wherein the effect of the {\em node geometry} and the {\em noise distribution} can be independently analyzed. \\

The paper is organized as follows. Section II discusses the related work in this area and the positioning of our work in this literature. Section III sets up the problem formulation and notations. Section IV describes our inference algorithm for NLOS localization. Section V summarizes our theoretical results and bounds on the localization accuracy. Section VI provides simulation results comparing the performance of the algorithm with the derived theoretical bounds. 

\section{Related work}
There has been significant research work focusing on cooperative localization algorithms particularly for the LOS case \cite{patwari2005locating, parker2007cooperative, alam2011dsrc}.  However NLOS localization has received relatively lesser focus in the literature. There has been quite some work on non-cooperative (i.e. no collaboration between mobile agents) NLOS localization \cite{chen1999non, ananthasubramaniam2008cooperative, casas2006robust} . A major part of this literature focuses on consensus based RANSAC (RAndom SAmple Consensus) like algorithms \cite{casas2006robust} which in effect assigns a 0-1 weight to the measurements detecting them to be LOS or NLOS. The complexity of this combinatorial approach explodes with increasing fraction of NLOS measurements especially in the mobile setting. Our algorithm can be conceptually thought of assigning a ``soft-weight'' to the measurements \cite{ismail2007nlos, khodjaev2010survey}  which helps get the complexity under control and further exploit the statistics of the measurements.\\

There is also some work on using ultra-wide-band (UWB) \cite{wymeersch2009cooperative} or multi-antenna array systems \cite{seow2008non, quek, ding2012} for localization where different arriving paths can be distinguished. Wymeersch et al. \cite{wymeersch2009cooperative} and consequent follow-up works \cite{wymeersch2012} propose a graphical model framework for the problem of interest which though has some similarities to our work, is quite different in the following aspects. We focus on narrowband environments leading to a different noise model that is relevant in vehicular safety applications. We also model correlation of readings across time that arises in the vehicular setup \cite{abbas2012measurement} and our algorithm efficiently handles this while existing work does not consider correlation. Further, our concern is also towards minimizing the communication overhead and our algorithm only requires a minor overhead to the existing DSRC message that needs to be transmitted periodically every $100ms$ as prescribed by the standard. There is more recent work focusing on convex optimization frameworks for the problem of interest \cite{venkymc} which are analytically tractable but computationally hard and communication intensive especially in mobile environments.\\

 There is considerable theoretical work addressing fundamental performance bounds for LOS localization \cite{savvides2003error, patwari2005locating, botteron2004cramer, chang2004estimation, weiss2008improvement}.  However, the complex nature of the expressions renders it difficult to gain insights on its scaling behavior as a function of the number of nodes. Further, there is relatively less literature for the NLOS setting \cite{qi2006analysis, shen2009fundamental2}. Shen et al. \cite{shen2009fundamental2} analyze the CRLB for UWB cooperative localization. Qi et al. \cite{qi2006analysis} consider a narrowband non-cooperative case with assumption on the exact knowledge of which measurements are NLOS that is difficult to get in practice especially in narrowband environments which is the focus of this work.  In practice, it might be reasonable to assume that one has prior knowledge of the statistics of the NLOS distribution. For example, exponential noise models for NLOS have been proposed in the literature \cite{saleh1987statistical}. 
An analytical characterization of the performance as a function of the NLOS noise distribution and the fraction of LOS measurements is missing in the literature, which we address here. One of our contributions is to provide concrete insights into the scaling behavior of the bound as a function of the number of anchors, vehicles and the fraction of LOS measurements. An arxiv \cite{ekambaram2012cooploc} version of this paper contains detailed proofs and descriptions. Parts of this work have appeared in conference proceedings \cite{ekambaram2010distributed}, \cite{DBLP:conf/globecom/EkambaramRS11}.

\section{Problem Setup}
\label{sec:ProbSet}
We consider a network of $N$ mobile agents and $M$ static anchors with known locations (see Fig. \ref{fig:vanet}). Let $u_t(k) \in {\mathbb{C}}$ denote the true location of the $k$th vehicle at time instant $t$, where the real part represents the $x$-coordinate and the imaginary part the $y$-coordinate. Measurements of the form  $\theta_t(km) = f(u_t(k), u_t(m)) + n_t(km),$ are obtained between vehicles that are within a communication radius $R$. The function $f(.)$ depends on the measurement modality. For example, $f(u_t(k),u_t(m)) = ||u_t(k)-  u_t(m)||$ in the case of Time of Arrival based sensors, $f(u_t(k),u_t(m)) \propto \frac{1}{||u_t(k)-u_t(m)||^\gamma}$ for the case of Received Signal Strength measurements,  $f(u_t(k),u_t(m)) = \tan^{-1}\frac{Im(u_t(k)-u_t(m))}{Re(u_t(k)-u_t(m))}$ for Angle of Arrival measurements etc. Each measurement is modeled as either a LOS dominated signal or a NLOS dominated signal by choosing the observation noise in the received signal $n_t(km)$, to be drawn from a mixture of two distributions, $(p_{LOS}(\theta_t(km)|u_t(k),u_t(m)),p_{NLOS}(\theta_t(km)|u_t(k),u_t(m)))),$ with mixture probabilities $(\alpha,1-\alpha)$ respectively. The model is motivated by some of the experimental work  carried out in the UWB  \cite{renzo2006ultra, turin1972statistical, pedersen2000stochastic} which validate that {\em some} fraction of the received signals are purely LOS-dominated signals. Let $z_t(km)$ be an indicator random variable which is 1 if  $\theta_t(km) \sim p_{LOS}(\theta_t(km)|u_t(k),u_t(m)),$ 0 otherwise. $z_t(km)$ can be correlated in time (e.g. an obstruction between two vehicles could lead to sustained NLOS condition) which we capture using a Markov chain with stationary distribution $(\alpha,1-\alpha)$. Let $p(u_t(k)|u_{t-1}(k))$ be the distribution that governs the evolution of the vehicle states across time which is obtained based on the inertial navigation system measurements. \\
  
 The goal is for each vehicle $k$   to estimate its own location $u_t(k)$, based on all measurements $\{\theta_{\tau}(km)\}_{\tau = 1}^t$ from its neighbors upto time $t$.  Given the non-gaussian nature of the problem, we adopt a \emph{particle filtering} approach that is a popular Monte-Carlo technique which can provide accuracies close to Minimum Mean Square Error (MMSE) estimates. The nature of our problem helps us obtain Kalman-like updates for particle filtering giving rise to a simplified and practical algorithm. We describe the graphical model formulation and our algorithm in the next section.
  
\section{Graphical models \& Particle Filtering for Cooperative Localization}
\label{sec:GraphModel}
\begin{figure}
\centering
\includegraphics[height=2.5in]{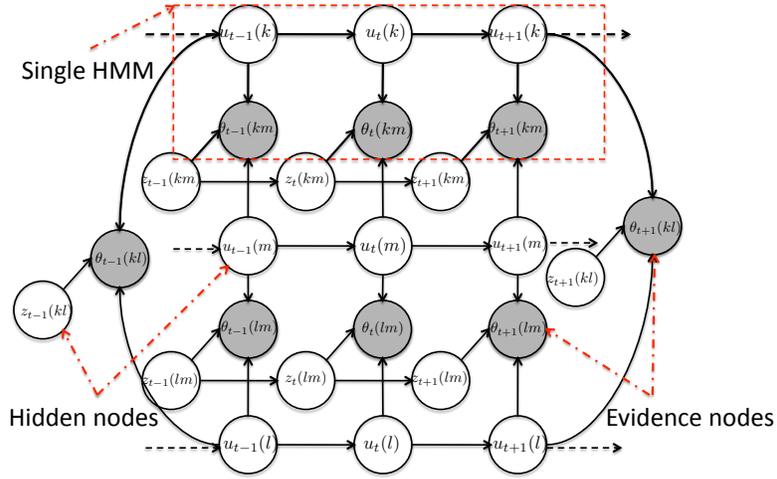}
\caption{Graphical model representation of the unknown vehicle states (locations) and the observations (LOS/NLOS measurements).}
\label{fig:VehHmm}
\vspace*{0in} \end{figure}
Graphical models provide a good way of modeling dependencies between different random variables  \cite{jordan1998learning}  and thereby exploit the probability structure to develop computationally feasible estimation algorithms. A directed acyclic graphical model, $\cal{G}(V,E)$, consists of a vertex set ${\cal V}$ where each vertex is associated with a random variable and an edge set ${\cal E}$ which is the collection of all directed edges, with the conditional independencies encapsulated by graph separation \cite{jordan1998learning}. Dependencies amongst the vehicle locations and the readings are captured by the graphical model shown in Fig \ref{fig:VehHmm}, which is a coupled Hidden Markov Model (HMM). The unshaded nodes are the \emph{hidden nodes}  to be estimated \footnote{Anchor nodes are not shown in this model for simplicity.} and the shaded nodes are observations coupling the different Markov chains of the vehicles, termed as \emph{evidence nodes}.  The joint probability distribution of all the random variables is given by,
\bea
p(\{u_t\}, \{\theta_t\}, \{z_t\} ) & = & \prod_{t,k,m}p(\theta_t(km)|u_t(k),u_t(m),z_t(km)) \prod_{k=1}^N p(u_1(k))\prod_t p(u_t(k)|u_{t-1}(k)
\eea
\bea 
& &   \prod_{k,m} p(z_1(km))\prod_t p(z_t(km)|z_{t-1}(km)). 
\eea 

Celebrated algorithms like the loopy belief propagation \cite{jordan1998learning} are hard to apply here given the continuous value of the state space and the dynamic evolving nature of the graph.

\begin{figure}
\centering
\includegraphics[height=1.5in]{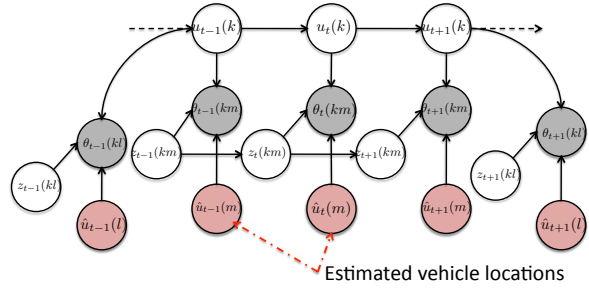}
\caption{Approximation for the original coupled HMM.}
\label{fig:VehHmmApp}
\vspace*{0in} \end{figure}
Particle Filtering is a well-known Monte Carlo simulation technique  \cite{doucet2001sequential} whose goal is to approximate
the posterior state density that can be used to obtain the MMSE estimate. Exact inference over  Fig \ref{fig:VehHmm} being hard,  we resort to an approximation for every vehicle as shown in Fig \ref{fig:VehHmmApp}. The new set of shaded nodes in this graph correspond to the estimated location of the other vehicles. At every time instant, each vehicle gets the estimated location of its neighbors from the previous time instants and assuming that it is close enough to the true location, the vehicle gets an estimate of its own location using particle filtering. This also reduces the communication overhead between vehicles and each vehicle needs to only transmit its location estimate and possibly the variance. Treating $\{u_t,z_t\}$ as a single state would lead to a exponentially large state space with increasing neighbors. We describe an efficient algorithm to circumvent this issue.\\

Consider only a single vehicle for now and omit the vehicle index $k$ for notational simplicity. Let $\{u_{1:t},z_{1:t}\}$ and $\theta_{1:t}$ denote
the set of states and observations up to time $t$ respectively. The posterior density of the vehicle location given the observations can be approximated as 
$p(u_t|\theta_{1:t}) \approx \hat{p}(u_t|\theta_{1:t}) = \frac{1}{K}\sum_{i=1}^K \delta(u_t - u_t^{i})$
where $\{u_{t}^{i} ; i=1,...,K\}$ are K i.i.d random samples (particles) picked from the distribution $p(u_t|\theta_{1:t})$ and $\delta(.)$ is the dirac-delta function. Given the hardness of sampling from $p(u_t|\theta_{1:t})$ , we choose a proposal distribution $\pi(u_{1:t}|\theta_{1:t})  = \pi(u_{1:t-1}|\theta_{1:t-1}) \pi(u_t|u_{1:t-1}, \theta_{1:t})$. Suppose that we are interested in estimating the mean of some function, $f(u_{1:t})$ of the vehicle location across time, i.e. $\mathbb{E}(f(u_{1:t}))$. We then have the following:
\bea
\mathbb{E}(f(u_{1:t})) & = & \int f(u_{1:t}) \frac{p(\theta_{1:t}|u_{1:t}) p(u_{1:t})}{\pi(u_{1:t}|\theta_{1:t}) p(\theta_{1:t})} \pi(u_{1:t}|\theta_{1:t}) du_{1:t},\\
& = & \frac{\mathbb{E}_{\pi} (f(u_{1:t}) w_t(u_{1:t}))}{\mathbb{E}_{\pi}(w_t(u_{1:t}))},
\eea
where $w_t(u_{1:t}) = \frac{p(\theta_{1:t}|u_{1:t}) p(u_{1:t})}{\pi(u_{1:t}|\theta_{1:t})} = \sum_{z_t} \frac{p(\theta_{1:t}, z_t|u_{1:t}) p(u_{1:t})}{\pi(u_{1:t}|\theta_{1:t})}$. Let, 
\bea \phi(u_{1:t},z_t)  =  \frac{p(\theta_{1:t}, z_t|u_{1:t})  p(u_{1:t})}{\pi(u_{1:t}|\theta_{1:t})}. \eea
We can now obtain a recursive update equation for $\phi(.)$ as follows.
\bea
 \phi(u_{1:t},z_t)  & = & \frac{p(\theta_{1:t}, z_t|u_{1:t})  p(u_{1:t})}{\pi(u_{1:t}|\theta_{1:t})},\\
                              & = & \sum_{z_{t-1}}  \frac{p(\theta_{1:t}, z_t, z_{t-1}|u_{1:t})  p(u_{1:t})}{\pi(u_{1:t}|\theta_{1:t})},\\
                              & = & \sum_{z_{t-1}} \frac{p(z_t,z_{t-1}|u_{1:t}) p(\theta_{1:t-1}|z_{t-1},u_{1:t-1})p(\theta_t|z_t,u_t)}{\pi(u_{1:t}|\theta_{1:t})}\\
                              & = & \sum_{z_{t-1}} \frac{p(z_t,z_{t-1})p(\theta_{1:t-1}|z_{t-1},u_{1:t-1})p(\theta_t|z_t,u_t)}{\pi(u_{1:t}|\theta_{1:t})}\\
                              & = & \frac{p(\theta_{t}|z_t,u_t) p(u_t | u_{t-1})}{\pi(u_t|u_{1:t-1} \theta_{1:t})} 
                             \sum_{z_{t-1}} p(z_t|z_{t-1}) \phi(u_{1:t-1},z_{t-1}),
\eea
where the last three equalities follow from the conditional independence structure of the different random variables (i.e. $z_t$'s are unconditionally independent of $u_t$'s and $\theta_t$ only depends on $z_t$ and $u_t$).

Choosing $\pi(u_t|u_{1:t-1} \theta_{1:t}) = p(u_t | u_{t-1})$, we get,
\bea \phi(u_{1:t},z_t)  = p(\theta_{t}|z_t,u_t) \sum_{z_{t-1}} p(z_t|z_{t-1}) \phi(u_{1:t-1},z_{t-1}).\eea
For each vehicle $k$ and its neighbor $m$ we have $\phi_t(km)$ and the update equations are given in Algorithm 1. All the expectations in the above equations are replaced by summations over samples of $u_t^i$ taken from the chosen proposal distribution. To take care of degeneracy issues over long time instants \cite{doucet2001sequential}, we employ the standard resampling procedure whenever the number of distinct particles fall below a threshold. In our simulations in Section \ref{sec:sim2}, we had 2000 particles and whenever the effective sample size calculated as $(\hat{N}_{eff} = \frac{1}{\sum_i (w_t^i)^2})$ was below a threshold (30 here), a simple multinomial resampling was carried out. If the number of distinct particles was too low, then all the particles were newly sampled around the estimated position.\\
 
  Conceptually, the performance of the algorithm depends on different system parameters such as the noise in the ranging measurements, INS noise, number of anchors, number of neighbors etc. In particular. the algorithm banks on the INS reading to be good enough to sustain the vehicle location until enough LOS measurements have been obtained. The algorithm is more sensitive to INS noise than typical belief propagation algorithms given that each vehicle extrapolates the neighbor's location estimate to the current time instant using the INS reading and there is no iteration to improve this estimate at that time instant. Hence large errors in INS readings would lead to significant performance degradation. The hope is that the INS measurements in existing vehicles can be quite precise for the algorithm to work well \cite{rezaei2007kalman}. Further the LOS measurements are assumed to be quite precise and the number of neighbors is assumed to be large and well spread. Failure of any or some of these conditions would lead to performance degradations.  Extensive simulation results under a realistic setup are presented in Section \ref{sec:sim2}.

   \begin{figure}
 \centering
 \includegraphics[height=2.5in]{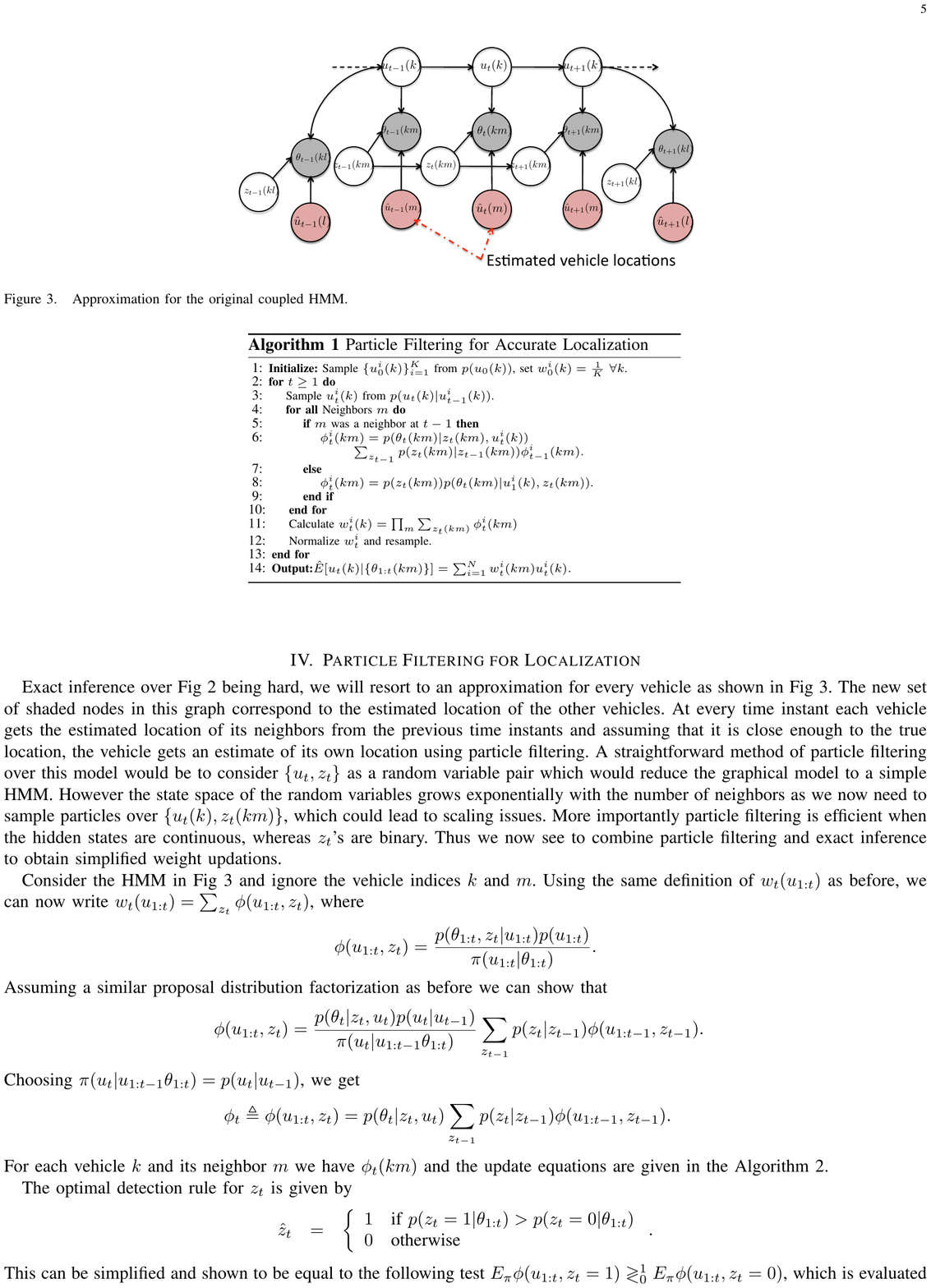}
 \vspace*{0in} \end{figure}

\section{Scaling Laws for Cooperative NLOS localization} 
From an engineering design perspective, it is important to understand the effect of the node geometry, connectivity graph, number of agents and anchors and the fraction of LOS measurements, on the localization performance. We modestly attempt to analyze these effects in a simple static setup with $N$ agents and $M$ anchors placed in a two-dimensional field. Let ${\bf u} = \{u(k)\}$ is the vector of all agent locations and ${\bf \hat{u}}$ the estimated locations. We study the the behavior of  the Cramer-Rao Lower Bound (CRLB), which is a lower bound on the estimation error for the class of {\em unbiased } estimators  $\displaystyle (\mbox{i.e.} \ \ \mathbb{E}(\hat{{\bf u}}) = {\bf u}, \ \mbox{where} \ \mathbb{E} $ is the expectation operator).  Let ${\bf \eta} = \left[
\begin{array}{rl} {\bf u}_R ; {\bf u}_I\end{array}\right]$\footnote{We use the notation $[{\bf a};{\bf b}]$ to represent the vertical concatenation of two column vectors ${\bf a}$ and ${\bf b}$. }, where ${\bf u_R} = Re\{{\bf u}\}, {\bf u_I} = Im\{{\bf u}\}$, be the vector of  parameters to be estimated. The Cramer-Rao theorem states that, 
$\mathbb{E}[({\bf u}-{\bf \hat{u}})({\bf u}-{\bf \hat{u}})^*] \succeq F^{-1},$
where ${\bf a}^*$ represents conjugate transpose of a complex column vector ${\bf a}$ and the matrix $F$, known as the Fisher Information Matrix, is defined by,
$F_{km} \triangleq \mathbb{E}\left\{  \frac{\partial \ln p({\Theta | \eta})}{\partial \eta_k}   \frac{\partial \ln p({\Theta | \eta})}{\partial \eta_m} \right\}.$

\begin{figure}
\centering
\includegraphics[height = 2in]{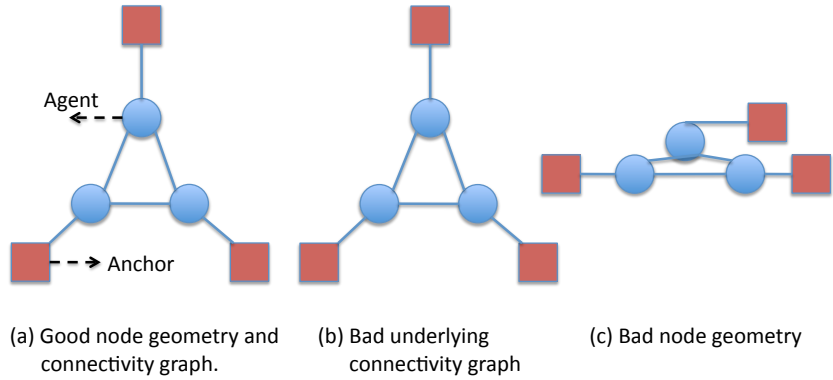}
\caption{Examples of node geometries and connectivity graphs.}
\label{fig:NodeGeometry}
\vspace*{0in} \end{figure}
The errors in localization can be attributed to two factors, one being the relative geometry of the anchors and vehicle locations (see Fig. \ref{fig:NodeGeometry}) and the other being the noise in the measurements. Thus one can naturally ask the question, ``Can the effect of node geometry and that of noise be independently analyzed?''. We answer this question in the affirmative under certain simplifying assumptions of homogenous noise statistics. Our theory is developed for the measurement model in Section \ref{sec:ProbSet}, with the further assumption that all the readings are independent and identically distributed. Even though for a fixed realization, each vehicle might experience different fading environments, when averaged across multiple deployments of the vehicles and scatterers and time, it is reasonable to assume that the statistics of the noise is similar across the network. Further, this simplifying assumption helps us gain intuition on how the different system parameters affect the performance of the system.
 The following theorem states the separation under the above conditions.
\begin{thm}[{Separation Principle:}]
The Fisher Information Matrix can be written as $F = g(p_{NOISE})F_G$, where the matrix $F_G$ depends only on the node locations and the underlying connectivity graph. The scalar function $g(.)$ is given by,
$g(p_{NOISE}) = \mathbb{E} \left\{\left(\frac{ \partial }{ \partial n} \mbox{ \em ln } p_{NOISE}({n})\right)^2\right\},$
under the assumption that $p_{NOISE}$ is differentiable over its support $[LL,UL]$ and $p(UL) - p(LL) = 0$.
\end{thm}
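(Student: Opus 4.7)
The plan is to exploit the i.i.d.\ noise assumption to decouple the measurement log-likelihood into a sum of independent contributions, and then to show that the cross terms in the Fisher information vanish under the stated boundary condition, leaving a clean product structure.

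First, I would write the joint log-likelihood. Because measurements are assumed independent and identically distributed with additive noise drawn from $p_{NOISE}$, the log-likelihood of the full observation vector $\Theta = \{\theta(ij)\}$ given $\eta$ decomposes as
\begin{equation}
\ln p(\Theta \mid \eta) = \sum_{(i,j)\in \mathcal{E}} \ln p_{NOISE}\bigl(\theta(ij) - f(u(i),u(j))\bigr),
\end{equation}
where $\mathcal{E}$ is the edge set of the connectivity graph. Differentiating with respect to a parameter $\eta_k$ (a real or imaginary coordinate of some agent) and using the chain rule gives
\begin{equation}
\frac{\partial}{\partial \eta_k} \ln p(\Theta \mid \eta) = - \sum_{(i,j)} \frac{p_{NOISE}'(n(ij))}{p_{NOISE}(n(ij))} \, \frac{\partial f(u(i),u(j))}{\partial \eta_k},
\end{equation}
with $n(ij) \triangleq \theta(ij) - f(u(i),u(j))$ distributed as $p_{NOISE}$.

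Next, I would plug this into the definition of $F_{km}$ and expand into a double sum over pairs $((i,j),(i',j'))$. The noise terms and the geometric derivatives separate inside the expectation, and by independence the expectation over $n(ij), n(i'j')$ factors whenever $(i,j)\neq(i',j')$. The key step is then to argue that each such cross term vanishes: for a single edge,
\begin{equation}
\mathbb{E}\!\left[\frac{p_{NOISE}'(n)}{p_{NOISE}(n)}\right] = \int_{LL}^{UL} p_{NOISE}'(n)\, dn = p_{NOISE}(UL) - p_{NOISE}(LL) = 0,
\end{equation}
which is precisely the stated regularity hypothesis. Hence only the diagonal terms $(i,j) = (i',j')$ survive, each contributing $\mathbb{E}\!\left[(p_{NOISE}'(n)/p_{NOISE}(n))^2\right] = g(p_{NOISE})$.

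Putting the pieces together, one obtains
\begin{equation}
F_{km} = g(p_{NOISE}) \sum_{(i,j)\in \mathcal{E}} \frac{\partial f(u(i),u(j))}{\partial \eta_k} \frac{\partial f(u(i),u(j))}{\partial \eta_m},
\end{equation}
so that $F = g(p_{NOISE})\, F_G$ with $F_G$ depending only on the node locations (through the derivatives of $f$) and on which edges are summed over (the connectivity graph). I expect the main obstacle to be purely notational: carefully separating the two index sets, verifying that the chain rule for $\partial/\partial \eta_k$ acts only on the geometric factor (never on $n(ij)$, since $\theta(ij)$ is the observed data), and invoking the boundary hypothesis at exactly the right place to kill the cross terms. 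The rest is essentially bookkeeping.
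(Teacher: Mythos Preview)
Your proposal is correct and follows essentially the same route as the paper: decompose the log-likelihood over edges, apply the chain rule to separate the noise score $p'_{NOISE}/p_{NOISE}$ from the geometric derivative $\partial f/\partial\eta_k$, use independence plus the boundary condition $p(UL)-p(LL)=0$ to kill the cross-edge terms, and factor out $g(p_{NOISE})$ from the surviving diagonal contributions. The paper merely specializes to $f(u(k),u(m))=\|u(k)-u(m)\|$ so that the geometric derivatives become $\cos(\phi_{ki})$ and $\sin(\phi_{ki})$, and then writes out the resulting entries of $F_G$ explicitly, but the underlying argument is identical to yours.
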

\begin{proof}
For simplicity, let us focus on the case where $\theta(km) =  ||u(k)-u(m)|| + n(km)$, though the derivation extends to the general case.  The $(k,m)$th entry in the matrix is given by,
$F_{km} \triangleq \mathbb{E}\left\{  \frac{\partial \ln p({\Theta | \eta})}{\partial \eta_k}   \frac{\partial \ln p({\Theta | \eta})}{\partial \eta_m} \right\}.$
Let ${\cal N}(k)$ be the set of neighbors of node $k$. Let us focus on the case when $\eta_k = u_{R}(k)$ and $\eta_m = u_{R}(m)$.  It is easy to show that,
\bea
\frac{\partial \ln p({\Theta | \eta})}{\partial \eta_k} = \sum_{i \in {\cal N}(k)}  \frac{\partial \ln p_{NOISE}({n(ki)})}{\partial n(ki)} \cos(\phi_{ki}), 
\eea
where  $n(ki) = \theta(ki) - ||u(k) - u(i)||$ and  $\cos(\phi_{ki}) =  \frac{(u_{R}(k) - u_{R}(i))}{||u(k) - u(i)||}$ (note that $\cos(\phi_{ki}) = -\cos(\phi_{ik})$ ). Thus we have,
\bea
F_{km} & = & \mathbb{E} \left \{ \sum_{i \in {\cal N}(k)}  \frac{\partial \ln p({n(ki)})}{\partial n(ki)} \cos(\phi_{ki}) \sum_{j \in {\cal N}(m)}  \frac{\partial \ln p({n(mj)})}{\partial n(mj)} \cos(\phi_{mj})\right\}, \\
 & = & \mathbb{E} \left \{ \sum_{i \in {\cal N}(k)}  \frac{ p'({n(ki)})}{ p(n(ki))} \cos(\phi_{ki}) \sum_{j \in {\cal N}(m)}  \frac{ p'({n(mj)})}{p(n(mj))} \cos(\phi_{mj})\right\}.
\eea
Consider the case when $k = m$. We get that,
 \bea
 F_{kk} & = & \mathbb{E} \left\{\sum_{i \in {\cal N}(k)} \cos^2(\phi_{ki}) \left(\frac{ p'({n(ki)})}{ p(n(ki))}\right)^2 +  
 \sum_{(i \neq j)\in {\cal N}(k)} \cos(\phi_{ki}) \cos(\phi_{kj}) \frac{ p'({n(ki)})}{ p(n(ki))} \frac{ p'({n(kj)})}{ p(n(kj))} \right\}, \\
 & = & g(p_{NOISE})\sum_{i \in {\cal N}(k)} \cos^2(\phi_{ki}), 
 \eea
where $g(p_{NOISE}) = \mathbb{E} \left\{\left(\frac{ p'({n(ki)})}{ p(n(ki))}\right)^2\right\}$. Assuming $p_{NOISE}(n)$ to be differentiable on its support $[LL,UL]$ and that $p(UL) = p(LL)$, we have,$\mathbb{E} \left\{\frac{ p'({n(ki)})}{ p(n(ki))}\right\} = 0$.

Similarly one can simplify the equations for different combinations of $k$ and $m$ and $\eta_k$ and $\eta_m$ to obtain the result $F = g(p_{NOISE}) F_G$, where the entries are given as follows (see  \cite{ekambaram2012cooploc}).

 For $\eta_k = u_{R_k}$ and $\eta_m = u_{R_m}$, we get
  \bea
  F_{km}& = & g(p_{NOISE})\left\{
\begin{array}{rl}  \displaystyle \sum_{i \in {\cal N}(k)} \cos^2(\phi_{ki}) & \mbox{if } k = m, 
                             \\ -\cos^2(\phi_{km}) & \mbox{if } m \in {\cal N} (k), 
                             \\ 0 & \mbox{o.w.}\end{array} \right. .  
\eea
 
 For $\eta_k = u_{I_k}$ and $\eta_m = u_{I_m}$ we have,
 \bea
  F_{km}& = & g(p_{NOISE})\left\{
\begin{array}{rl}  \displaystyle \sum_{i \in {\cal N}(k)} \sin^2(\phi_{ki}) & \mbox{if } k = m, 
                             \\ -\sin^2(\phi_{km}) & \mbox{if } m \in {\cal N} (k), 
                             \\ 0 & \mbox{o.w.}\end{array} \right. .  
\eea
 
 For $\eta_k = u_{R_k}$ and $\eta_m = u_{I_m}$ we have,
  \bea
  F_{km}& = & g(p_{NOISE})\left\{
\begin{array}{rl}  \displaystyle \sum_{i \in {\cal N}(k)} \sin(\phi_{ki}) \cos(\phi_{ki}) & \mbox{if } k = m, 
                             \\ -\sin(\phi_{km})\cos(\phi_{km}) & \mbox{if } m \in {\cal N} (k), 
                             \\ 0 & \mbox{o.w.}\end{array} \right. .  
\eea

The $g(p_{NOISE})$ term can be factored out to get the desired result.
\end{proof}
For a mixture distribution, we have  $
 g(p_{NOISE})  =  \int_{-\infty}^{+\infty} \frac{ (\alpha p'_{LOS}(n) + (1-\alpha)p'_{NLOS}(n))^2}{ (\alpha p_{LOS}(n) + (1-\alpha)p_{NLOS}(n))} dn.$ The assumption on $p_{NOISE}$ holds for a wide class of distributions such as  ex-gaussian, gaussian mixtures, uniform distribution etc, that are commonly used models in the NLOS setup. The assumption on the identical statistics of the readings is mainly questionable when we consider measurements from anchors like GPS and measurements from other neighboring vehicles which can have widely different noise parameters.  In such a case, one can easily show that the Fisher Matrix can be written as $F = g(p_{NOISE}^{veh})F_G^{veh} + g(p_{NOISE}^{sat})F_G^{sat}$, wherein $F_G^{veh}$ is the Fisher Matrix derived by only using the inter-vehicular measurements and $F_G^{sat}$ is derived by only considering the satellite-vehicle measurements.
 
 \begin{figure}
 \centering
 \subfigure[Variation of $\frac{1}{g(p_{NOISE})}$ as a function of $\alpha$ for ex-gaussian and Gaussian mixture distributions.]
 {\includegraphics[height = 2.5in]{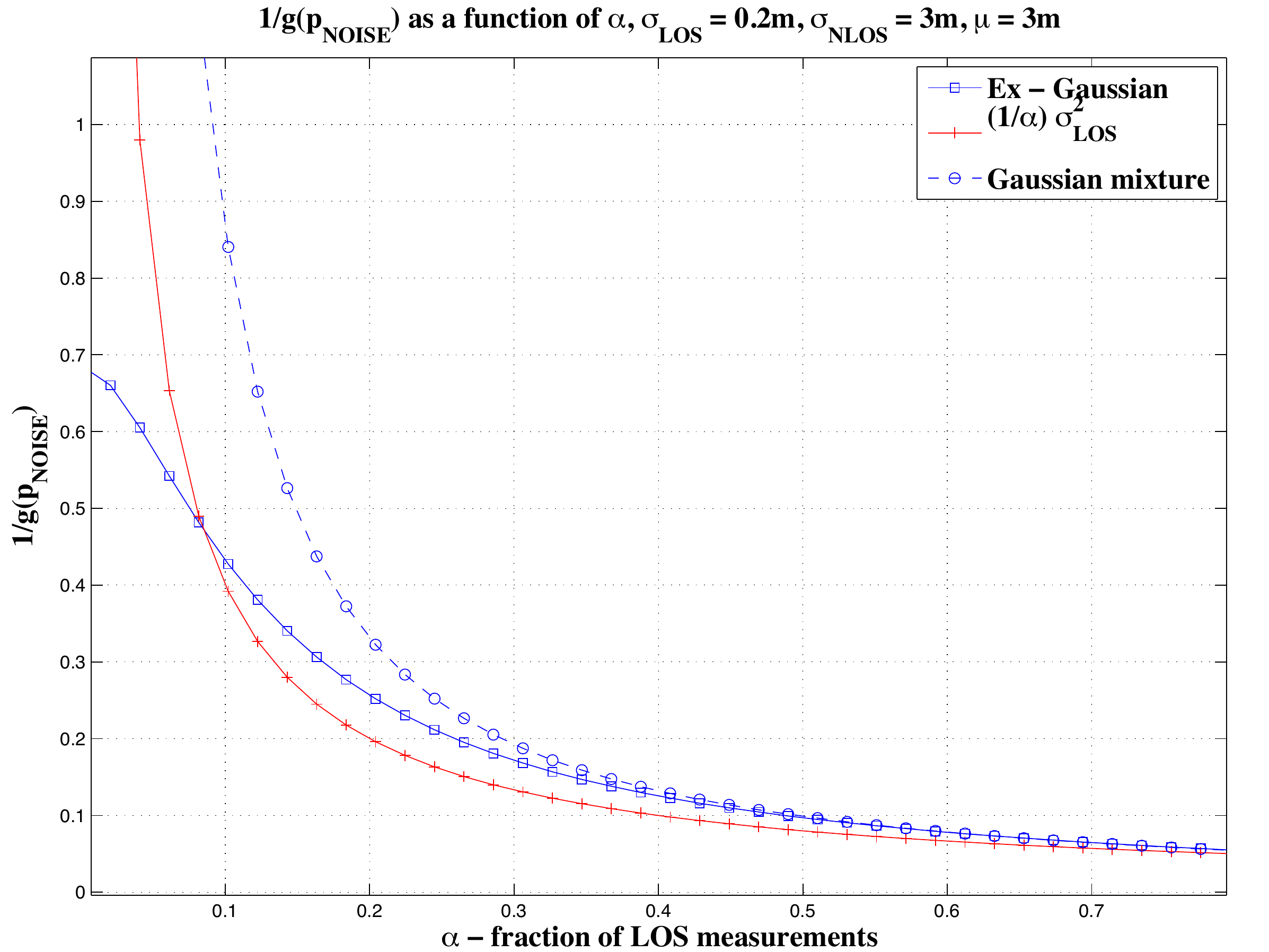} \label{fig:crlb_nlos_alpha}}
 \subfigure[$g'(\alpha)$ as a function of $\alpha$ for ex-gaussian.]
 {\includegraphics[height = 2.5in]{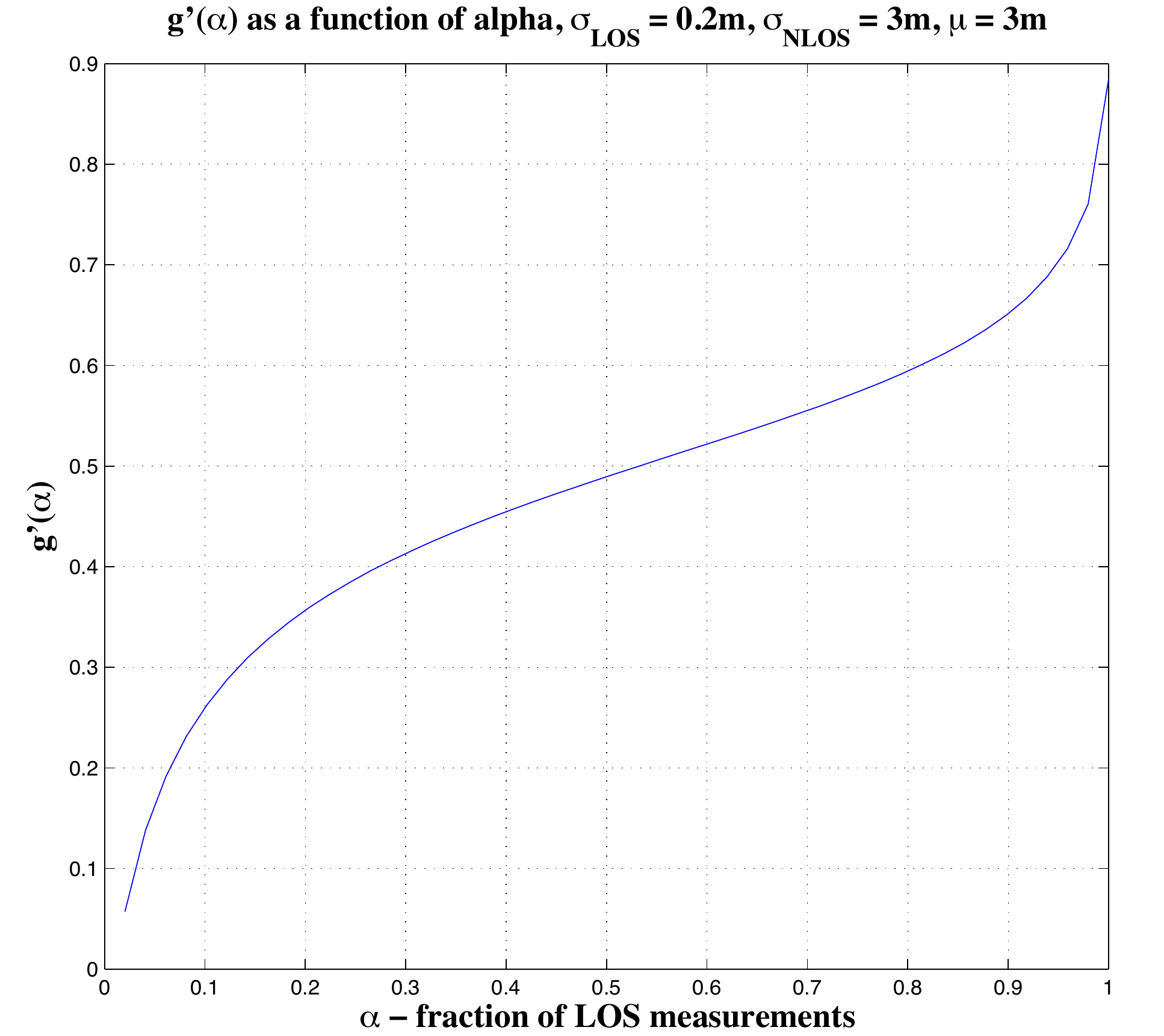}\label{fig:crlb_nlos_diffalpha}}
 \caption{Behavior of the CRLB as a function of the different noise parameters.}
 \vspace*{0in} \end{figure}
 

To analyze the behavior, we focus on distance measurements and consider a simple mixture model for the noise distribution. We will assume that $p_{LOS} \sim {\cal N}(0,\sigma_{LOS}^2)$. Fig. \ref{fig:crlb_nlos_alpha}, shows the behavior of $g(p_{NOISE})$ as a function of $\alpha$ for $p_{NLOS}$ being ex-gaussian and  positive mean gaussian having the same mean and variance. The scaling is compared against $\frac{\sigma^2}{\alpha}$ which seems to be a good approximation of the behavior at higher values of $\alpha$. At lower $\alpha$ values, $\frac{\sigma^2}{\alpha}$ would be worse off given that we are completely discarding the NLOS measurements. However, the ex-gaussian curve performs better given that we make use of the noise statistics. Fig \ref{fig:crlb_nlos_diffalpha} shows a plot of $g'(\alpha) = \frac{\partial}{\partial \alpha} g(.)$. The interesting trend  to note here is that the differential change in $g(.)$ varies sharply at lower values and higher values of $\alpha$. This suggests that at lower values of $\alpha$, even a small fraction of LOS measurements can significantly help and at higher $\alpha$ a small fraction of NLOS can significantly hurt.\\ 

To analyze the effect of the number of nodes on the localization error, we focus on the case, $\theta(km) = ||u(k)-u(m)||$.
We undertake an {\em incremental analysis} by starting out with an existing network of $N$ agents and $M$ anchors with Fisher Matrix $F = g(p_{NOISE})F_G$ and quantifying the effect of adding additional anchors and agents in the network.  We assume that the node density is large enough and ignore any boundary effects in our derivations. Suppose we randomly deploy additional $\tilde{M}$ new anchors and $\tilde{F}$ is the new Fisher Matrix, we have that,

\begin{thm}[{ Scaling law for anchors:}]
For large $\tilde{M}$, the lower bound on the sum mean squared localization error is given by,
$ \mbox{Trace} (\tilde{F}^{-1}) = \frac{1}{g(p_{NOISE})}\displaystyle \sum_{i=1}^{2N} \frac{1}{\lambda_i + \frac{\rho \tilde{M}}{2}},$
where $\rho= \pi R^2$ with $\rho \tilde{M}$ being the average number of new anchor measurements of every node and $\lambda_i$'s are the eigen values of $F_G$ (assumed to be full rank).
\end{thm}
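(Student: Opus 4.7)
The plan is to leverage the Separation Principle from Theorem 1 and reduce the problem to computing the expected Fisher contribution of the new anchors, which turns out to be a scalar multiple of the identity.

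First, I would observe that when $\tilde{M}$ new anchors are added, the log-likelihood of the full set of measurements decomposes additively into contributions from the original measurements and from the new anchor-to-agent measurements. Consequently, the Fisher information is additive: $\tilde{F} = F + F^{anc}$, where $F^{anc}$ collects the contribution of the new anchor measurements. Reapplying the Separation Principle to $F^{anc}$ gives $\tilde{F} = g(p_{NOISE})(F_G + F_G^{anc})$, and since anchors have known locations, an anchor measurement involving agent $k$ produces derivatives only with respect to $u_R(k)$ and $u_I(k)$. Hence $F_G^{anc}$ is block-diagonal across agents (no cross-agent coupling), and the only remaining question is its per-agent diagonal block.

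Next I would compute, for a fixed agent $k$, the expected $2\times 2$ diagonal block of $F_G^{anc}$. Using the formulas from the proof of Theorem 1, the contribution of a single new anchor at position $u(i)$ within communication radius $R$ of agent $k$ is the matrix
\begin{equation}
\begin{bmatrix} \cos^2(\phi_{ki}) & \cos(\phi_{ki})\sin(\phi_{ki}) \\ \cos(\phi_{ki})\sin(\phi_{ki}) & \sin^2(\phi_{ki}) \end{bmatrix}.
\end{equation}
For anchors placed uniformly at random with density such that the expected number inside a disk of area $\rho = \pi R^2$ is $\rho\tilde{M}$ (which is precisely the ``average number of new anchor measurements'' quoted in the statement), the angle $\phi_{ki}$ is uniform on $[0,2\pi)$. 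Hence the per-anchor expectations are $\mathbb{E}[\cos^2 \phi_{ki}] = \mathbb{E}[\sin^2\phi_{ki}] = 1/2$ and $\mathbb{E}[\cos\phi_{ki}\sin\phi_{ki}] = 0$. Summing over the (random) anchor neighbors of agent $k$ and invoking the law of large numbers as $\tilde{M}\to\infty$, the diagonal block concentrates around $(\rho\tilde{M}/2)\,I_2$. Stacking over all agents, $F_G^{anc} \to (\rho\tilde{M}/2)\,I_{2N}$.

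Finally, because $F_G^{anc}$ is a scalar multiple of the identity, it commutes with $F_G$ and shares its eigenbasis; the eigenvalues of $F_G + F_G^{anc}$ are simply $\lambda_i + \rho\tilde{M}/2$. Using $\operatorname{Trace}(\tilde{F}^{-1}) = g(p_{NOISE})^{-1}\sum_i (\lambda_i + \rho\tilde{M}/2)^{-1}$ yields the claim. The main obstacle I anticipate is the ``for large $\tilde{M}$'' qualifier: one needs a concentration argument (a Chebyshev bound on the variance of the empirical sums of $\cos^2, \sin^2, \cos\sin$ under the Poisson/uniform anchor model) to show that $F_G^{anc}$ actually converges to $(\rho\tilde{M}/2)I$ with vanishing relative error, and one must argue that boundary effects (agents near the edge of the deployment region whose disk of radius $R$ is truncated) are absorbed into the assumption that node density is large enough for the geometric averages to hold unaltered.
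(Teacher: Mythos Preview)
Your proposal is correct and follows essentially the same route as the paper: additivity of Fisher information for the new anchor measurements, the observation that anchor contributions are block-diagonal with per-agent $2\times 2$ blocks $\begin{bmatrix}\cos^2\phi & \cos\phi\sin\phi\\ \cos\phi\sin\phi & \sin^2\phi\end{bmatrix}$, a law-of-large-numbers averaging under the uniform-angle assumption to obtain $(\rho\tilde{M}/2)I_{2N}$, and the eigenvalue-shift trace computation. The paper's proof (sketch and Appendix~A) is the same argument written out with explicit matrix bookkeeping rather than your block-diagonal language, and it likewise waves at boundary effects and invokes the strong law without a quantitative concentration bound.
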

\begin{proof}
We provide a sketch of the proof here with a simple example and defer the detailed proof to Appendix A.
 Consider the case where we add a single additional anchor $a$ in a network of $N$ agents and $M$ anchors. Let $\tilde{F}$ be the new Fisher Information Matrix. Then it is easy to see that $\tilde{F}$ is the same as $F$ except for the following entries. If node $k$ is in the communication radius of $a$, then we have, for $\eta_k = u_{R_k}$, $\tilde{F}_{kk} = F_{kk} + \cos^2(\phi_{ka})$,  for $\eta_k = u_{I_k}$, $\tilde{F}_{kk} = F_{kk} + \sin^2(\phi_{ka})$ and for $\eta_k = u_{R_k}, \eta_m = u_{I_k}$, $\tilde{F}_{km} = F_{km} + \cos(\phi_{ka})\sin(\phi_{ka})$ where $\phi_{ka}$ is the angle between node $k$ and anchor $a$. For all other entries,  $\tilde{F}_{kk} = F_{kk}$. Thus when $\tilde{M}$ anchors are randomly added into the network, the new Fisher Information Matrix has the following structure.

\bea
\small
\tilde{F}  =  F +  \left[ \begin{array}{ccc|ccc} 
       \displaystyle\sum_{a \in {\cal N}(1)} \cos^2(\phi_{1a}) & 0 &0& \displaystyle\sum_{a \in {\cal N}(1)} \cos(\phi_{1a}) \sin(\phi_{1a}) & 0 &0\\
       0 & ...& 0 &  0 & ...& 0\\
       0 & 0 &  \displaystyle\sum_{a \in {\cal N}(N)} \cos^2(\phi_{Na}) &0 & 0 &  \displaystyle\sum_{a \in {\cal N}(N)} \cos(\phi_{Na})\sin(\phi_{Na}) \\
       \hline
        \displaystyle\sum_{a \in {\cal N}(1)} \cos(\phi_{1a}) \sin(\phi_{1a})& 0 &0& \displaystyle\sum_{a \in {\cal N}(1)}  \sin^2(\phi_{1a}) & 0 &0\\
       0 & ...& 0 &  0 & ...& 0\\
       0 & 0 &  \displaystyle\sum_{a \in {\cal N}(N)} \cos(\phi_{Na})  \sin(\phi_{Na}) &0 & 0 &  \displaystyle\sum_{a \in {\cal N}(N)} \sin^2(\phi_{Na}) \\
    \end{array}\right] ,
\eea

Since the anchors are randomly placed, it is reasonable to assume that the angles $\phi_{ka}$ are i.i.d. $U(0,2\pi)$ for each node $k$. Thus it is easy to see that for large $\tilde{M}$ the above expression converges to $\tilde{F} \rightarrow F + \frac{\rho \tilde{M}}{2} I_{2N} $, where $\rho \tilde{M}$ is the average number of newly added anchors that are neighbors of each node assuming a homogenous placement of the nodes. Thus $\mbox{Trace}(\tilde{F}^{-1}) = \sum_{i=1}^{2N}  \frac{1}{\lambda_i + \frac{\rho \tilde{M}}{2}}$, where $\lambda_i$ are the eigen values of $F$.
\end{proof}
 The eigen values of the Fisher Information Matrix,   $\{\lambda_i\}_{i=1}^{2N}$ can be interpreted as a measure of the ``precision'' in the agent location estimates. The factor $\frac{\rho \tilde{M}}{2}$ is the ``additional precision''   from the newly added anchors. 
 If under the same setup we add $\tilde{N}$ new agents which only get measurements from the existing nodes in the network, then,

\begin{thm} [{Scaling law for agents:}]
For large $\tilde{N}$, the Fisher Information Matrix $\tilde{F}$  is given by,\\
 \bea
 \tilde{F} & = & F + \frac{1}{g(p_{NOISE})}\frac{\rho \tilde{N}}{2}\left( \left(1-\frac{1}{\rho(N+M)}\right)I_{2N} \right. \left. - \frac{1}{\rho(N+M)}\left[
\begin{array}{cc}
  {\bf 1}_N{\bf 1}_{N}^T&  {\bf 0}  \\
 {\bf 0} &    {\bf 1}_N{\bf 1}_{N}^T \\
\end{array}
\right]\right).
\eea
\end{thm}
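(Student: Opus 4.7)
The plan is to augment the parameter vector with the $2\tilde{N}$ unknown coordinates of the new agents and work with the $2(N+\tilde{N}) \times 2(N+\tilde{N})$ full Fisher Information Matrix in block form
\begin{equation*}
F_{\mathrm{full}} = \begin{pmatrix} A & B \\ B^T & C \end{pmatrix},
\end{equation*}
where the original agents' parameters occupy the first block. The effective Fisher information for the original $N$ agents is then the Schur complement $\tilde{F} = A - B C^{-1} B^T$, and this is what I will compare against the stated expression.

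To populate the three blocks I will reuse the per-measurement formulas of Theorem 1 together with the uniform-angle law-of-large-numbers argument from Theorem 2. Each original agent $k$ picks up on average $\rho \tilde{N}$ new-agent neighbours, each contributing the rank-one direction $v_{kj} v_{kj}^T$ with $v_{kj} = (\cos\phi_{kj},\sin\phi_{kj})^T$ whose expectation under uniform $\phi_{kj}$ is $\tfrac{1}{2} I_2$; this gives $A \approx F + g(p_{NOISE}) \tfrac{\rho \tilde{N}}{2} I_{2N}$. By the same argument each new agent $j$ connects to about $\rho(N+M)$ existing nodes, so $C \approx g(p_{NOISE}) \tfrac{\rho(N+M)}{2} I_{2\tilde{N}}$, which is block-diagonal because new agents do not measure one another. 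The cross block has $B_{kj} = -g(p_{NOISE}) \, \mathbb{I}[j \in {\cal N}(k)] \, v_{kj} v_{kj}^T$.

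With $C^{-1} \approx \tfrac{2}{g(p_{NOISE}) \rho (N+M)} I_{2\tilde{N}}$ the Schur step becomes algebraic. Using the projection identity $(v_{kj} v_{kj}^T)^2 = v_{kj} v_{kj}^T$, the diagonal blocks of $B C^{-1} B^T$ collapse to $\tfrac{g(p_{NOISE}) \tilde{N}}{N+M} I_2$, which combines with $A$ to produce the $(1 - \tfrac{1}{\rho(N+M)}) I_{2N}$ piece of the stated formula. The delicate step, and the main obstacle I anticipate, is the off-diagonal blocks $(B C^{-1} B^T)_{kk'}$ for $k \neq k'$: these equal $\tfrac{2 g(p_{NOISE})}{\rho(N+M)} \sum_j \mathbb{I}[j \in {\cal N}(k) \cap {\cal N}(k')] \cos(\phi_{kj} - \phi_{k'j}) v_{kj} v_{k'j}^T$ and in principle depend on $\|u(k) - u(k')\|$ through the joint neighbourhood probability. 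Producing the uniform ${\bf 1}_N {\bf 1}_N^T$ block structure displayed in the theorem will require a mean-field assumption in which $\phi_{kj}$ and $\phi_{k'j}$ are treated as independent $U(0, 2\pi)$ variates while the joint indicator $\mathbb{I}[j \in {\cal N}(k) \cap {\cal N}(k')]$ averages to $\rho$. Under this idealization the cross-angle cosine terms vanish in expectation (killing the real--imaginary block) while every real--real and imaginary--imaginary off-diagonal entry reduces to the common value $\tfrac{g(p_{NOISE}) \tilde{N}}{2(N+M)}$; subtracting the portion already subsumed by the identity piece then leaves the two ${\bf 1}_N {\bf 1}_N^T$ diagonal blocks in the displayed expression.
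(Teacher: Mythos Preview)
Your plan is essentially the paper's own proof: augment the parameter vector, take the Schur complement $A-BC^{-1}B^T$ of the block corresponding to the new agents, and close the sums by a uniform-angle law-of-large-numbers together with the mean-field assumption that $\phi_{kj}$ and $\phi_{k'j}$ are independent $U(0,2\pi)$ (the paper phrases this as ``assume the initial set of nodes were also placed uniform randomly'' and then simply substitutes $N'\mapsto\rho N'$, $N+M\mapsto\rho(N+M)$, which is exactly the heuristic you flag as the delicate step). One bookkeeping slip to watch when you write it out: the diagonal $2\times2$ block of $BC^{-1}B^T$ that you compute, $\tfrac{g(p_{NOISE})\tilde{N}}{N+M}I_2$, combines with $A$ to give a coefficient $\bigl(1-\tfrac{2}{\rho(N+M)}\bigr)$ on $I_{2N}$, not $\bigl(1-\tfrac{1}{\rho(N+M)}\bigr)$; the missing $\tfrac{1}{\rho(N+M)}$ is precisely the diagonal of the $\mathbf{1}_N\mathbf{1}_N^T$ block, so the decomposition into the two pieces of the stated formula works out once you write $BC^{-1}B^T\approx \tfrac{g(p_{NOISE})\tilde{N}}{2(N+M)}\bigl(I_{2N}+\mathrm{diag}(\mathbf{1}_N\mathbf{1}_N^T,\mathbf{1}_N\mathbf{1}_N^T)\bigr)$ rather than attributing the full diagonal to the identity piece.
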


\begin{proof}
As before we provide a sketch of the proof with an example and defer the detailed proof to Appendix B. Consider the case when $\tilde{N}$ agents are added to an existing network of $N$ agents and $M$ anchors. We will assume that the newly added nodes get measurements {\em only} with the existing $N$ nodes in the system that are within their communication radius. This simplifying assumption helps get a better understanding of the scaling behavior. We will provide an intuition for the derivation using an example here and refer the reader to \cite{ekambaram2012cooploc} for more details. Let us suppose that we add a single additional agent to the existing network of agents and anchors. Let the location of this agent be $z_{R1} + j z_{I1}$. Further assume that this agent gets measurements from all the existing agents in the network. Define the new parameter vector as $\tilde{\eta} = [{\bf u}_R;{\bf u}_I;z_{R1};z_{I1}]$. One can verify that the new Fisher Information Matrix, $\tilde{F}$ has the following form,\\
\begin{figure}[h!]
\centering
\includegraphics[height = 2.5in]{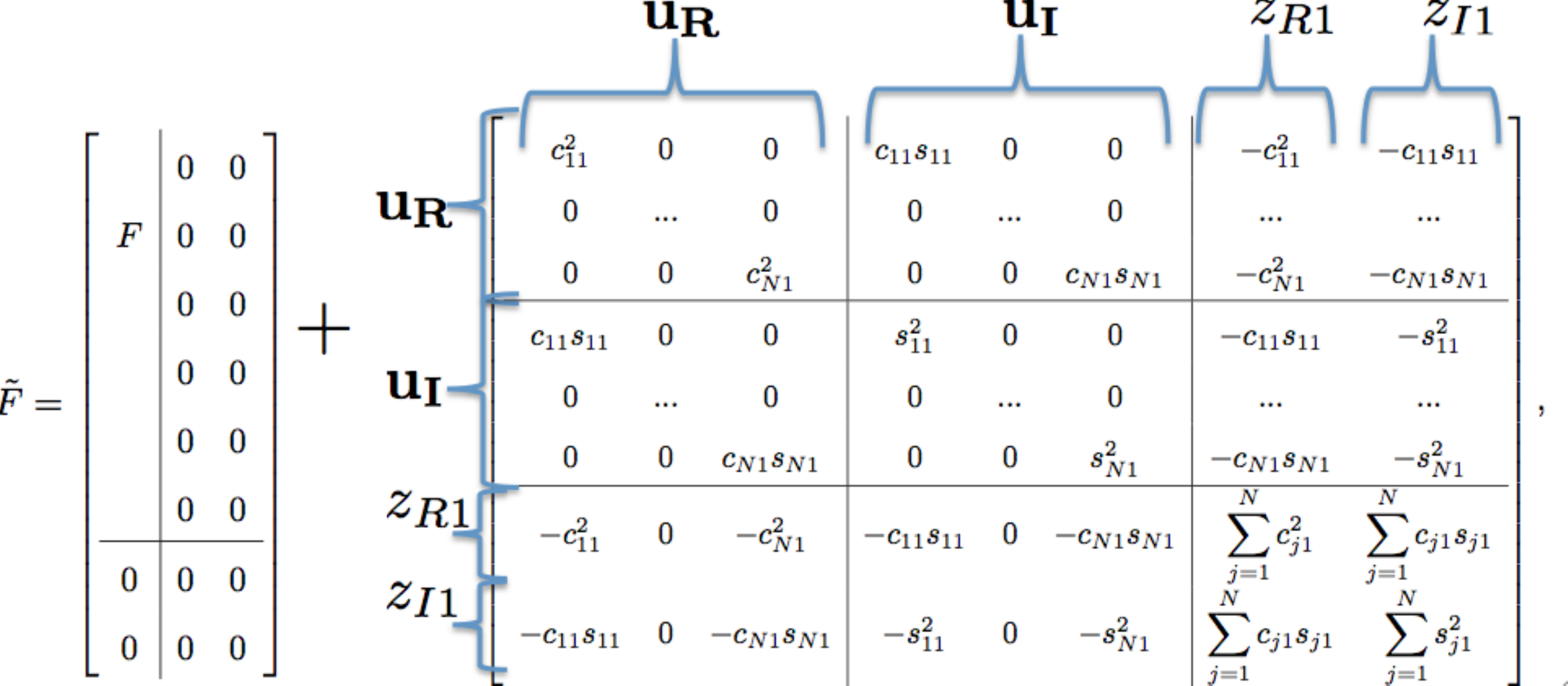}
\end{figure}
\vspace*{-0.9in}
\bea
\eea
   where $c_{j1} = \cos(\tilde{\phi}_{j1})$ and $s_{j1} = \sin(\tilde{\phi}_{j1})$ and $\tilde{\phi}_{j1}$ is the angle between the newly added agent and node $j$. The $\tilde \ $ notation is used to distinguish the angles formed due to the newly added agents as opposed to the existing agents in the system.
  If the N nodes are randomly placed, then the angles are $U(0,2\pi)$. Thus one can see that the following entries converge by strong law of large numbers  for large $N$, i.e.,  $\sum_{j=1}^N\cos^2(\phi_{j1}) \rightarrow \frac{N}{2}$, $\sum_{j=1}^N\sin^2(\phi_{j1}) \rightarrow \frac{N}{2}$ and $\sum_{j=1}^N\cos(\phi_{j1}) \sin(\phi_{j1})\rightarrow 0$. Now let us suppose that we add another agent with location $z_2$ in the network. In order to look at positioning error of the first $N$ nodes, we need to look at the Schur Complement of the leftmost $N \times N$ submatrix of $\tilde{F}$. The Schur complement of this submatrix is given by $F + G- \frac{2}{N}BB^T$, where for the parameter ordering $\tilde{\eta} = [{\bf u}_R;{\bf u}_I;z_{R1};z_{R2};z_{I1};z_{I2}]$, the matrices $B$ and $G$ are given by,
   \bea
   \small
   B =  \left[ \begin{array}{cccc} 
-c_{11}^2 & -c_{12}^2 & -c_{11}s_{11} & -c_{12}s_{12}\\
...& ...& ... & ...\\
 -c_{N1}^2 &  -c_{N2}^2 &-c_{N1} s_{N1}&-c_{N2} s_{N2}\\ 
  -c_{11}s_{11}&  -c_{12}s_{12}& -s_{11}^2 & -s_{12}^2\\
   ...&...&...&...\\
-c_{N1}s_{N1} & -c_{N2}s_{N2} & -s_{N1}^2 & -s_{N2}^2
   \end{array}\right] , \eea 
     \bea
     \small
   G =    \left[ \begin{array}{ccc|ccc} 
c_{11}^2 +c_{12}^2        &0 & 0                       & c_{11} s_{11} +  c_{12} s_{12} & 0 & 0                      \\
 0                     & ...& 0                       &  0                     & ... & 0                       \\
 0                     &0  & c_{N1}^2 + c_{N2}^2       &  0                     & 0  & c_{N1}s_{N1} +  c_{N2}s_{N2} \\ 
 \hline
c_{11}s_{11}+c_{12}s_{12} &0 & 0                        & s_{11}^2  +s_{12}^2      & 0 & 0                      \\
 0                    & ...& 0                        &  0                     & ... & 0                       \\
 0                    &0  & c_{N1}s_{N1}+c_{N2}s_{N2} &  0                     & 0  & s_{N1}^2 + s_{N2}^2          \\ 
 \end{array}\right] .
   \eea
   
   Thus when we add $\tilde{N}$ additional nodes into the system, assuming that the nodes are uniform randomly placed, $G \rightarrow \frac{\tilde{N}}{2} I_{2N}$, $BB^T \rightarrow \frac{\tilde{N}}{4} I_{2N} + \frac{\tilde{N}}{4}\left[
\begin{array}{cc}
  {\bf 1}_N{\bf 1}_{N}^T&  {\bf 0}  \\
 {\bf 0} &    {\bf 1}_N{\bf 1}_{N}^T \\
\end{array}
\right]$. 
Extending this to the case when each additional node has measurements only with a fraction $\rho$ of the existing nodes and anchors as well the above derivation holds with $N$ replaced by $\rho(N+M)$ and $\tilde{N}$ replaced by $\rho \tilde{N}$.   
\end{proof}

\begin{cor}
For large $N$, assuming $F$ to be full rank, the above bound reduces to, 
 $\mbox{Trace}(\tilde{F}^{-1}) = \frac{1}{g(p_{NOISE})}\sum_{i=1}^{2N} \frac{1}{\lambda_i + \frac{\rho \tilde{N}}{2}}.$
\end{cor}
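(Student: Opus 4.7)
The plan is to massage the expression for $\tilde F$ delivered by Theorem 3 into a ``geometry-only'' matrix plus a scalar multiple of the identity, and then read off the trace of the inverse spectrally.

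First I would apply the separation principle ($F=g(p_{NOISE}) F_G$) and extract the common factor $g(p_{NOISE})$ from the statement of Theorem 3, writing
\begin{equation*}
\tilde F \;=\; g(p_{NOISE})\left( F_G + \tfrac{\rho\tilde N}{2}\Bigl(1-\tfrac{1}{\rho(N+M)}\Bigr) I_{2N} - \tfrac{\tilde N}{2(N+M)}\begin{bmatrix}\mathbf{1}_N\mathbf{1}_N^T & \mathbf{0}\\ \mathbf{0} & \mathbf{1}_N\mathbf{1}_N^T\end{bmatrix}\right).
\end{equation*}
Next I would isolate the ``clean'' part $A:=F_G + \tfrac{\rho\tilde N}{2} I_{2N}$ and treat the remaining two pieces as perturbations: the scalar correction $-\tfrac{\tilde N}{2(N+M)}I_{2N}$ (of order $1/N$ in operator norm) and the block rank-$2$ perturbation $P := \tfrac{\tilde N}{2(N+M)}\mathrm{diag}(\mathbf{1}_N\mathbf{1}_N^T,\mathbf{1}_N\mathbf{1}_N^T)$.

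The key step is to show that neither perturbation affects the trace of the inverse at leading order in $N$. For the scalar correction I would expand $(A - \tfrac{\tilde N}{2(N+M)}I)^{-1}$ in a Neumann series and observe that the first-order term contributes $O(1/N)\cdot \mathrm{Trace}(A^{-2})$, which is negligible compared with $\mathrm{Trace}(A^{-1})=\sum_i 1/(\lambda_i+\tfrac{\rho\tilde N}{2})$ (a sum of $2N$ positive terms). For the rank-$2$ piece $P$ I would invoke the Sherman--Morrison--Woodbury identity twice (one for each block), giving a correction of the form $\mathrm{Trace}(A^{-1} u u^T A^{-1})/(1+u^T A^{-1} u)$ with $u = \sqrt{\tilde N/2(N+M)}\cdot [\mathbf{1}_N;\mathbf{0}]$ (and the analogous $[\mathbf{0};\mathbf{1}_N]$); each such term is a ratio whose numerator and denominator both scale like $N$, so the correction is $O(1)$ while the main trace scales as $2N/(\lambda_{\min}+\tfrac{\rho\tilde N}{2})$, hence vanishes relative to the main term as $N\to\infty$.

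Once those perturbations are dropped, the inverse of $A$ can be diagonalized in the same orthonormal basis that diagonalizes $F_G$, whose eigenvalues are $\{\lambda_i\}_{i=1}^{2N}$ by assumption of full rank. The eigenvalues of $A$ are $\lambda_i+\tfrac{\rho\tilde N}{2}$, and therefore
\begin{equation*}
\mathrm{Trace}(\tilde F^{-1}) \;=\; \frac{1}{g(p_{NOISE})}\,\mathrm{Trace}(A^{-1}) \;=\; \frac{1}{g(p_{NOISE})}\sum_{i=1}^{2N}\frac{1}{\lambda_i+\tfrac{\rho\tilde N}{2}}.
\end{equation*}
The main obstacle I anticipate is rigorously controlling the rank-$2$ Woodbury correction: one needs to show that even though $\mathbf{1}_N\mathbf{1}_N^T$ has a large eigenvalue $N$, the $(1+u^TA^{-1}u)^{-1}$ denominator in Woodbury rescues the bound and keeps the whole correction sub-dominant in $\mathrm{Trace}(\tilde F^{-1})$. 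Everything else is straightforward algebra once the spectral picture is in place.
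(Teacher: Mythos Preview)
Your approach is sound and in fact more careful than the paper's. The paper offers no explicit proof of this Corollary; it is stated as an immediate consequence of Theorem~3, the implicit reasoning being simply that the coefficients $\tfrac{1}{\rho(N+M)}$ multiplying both the identity correction and the block $\mathbf{1}_N\mathbf{1}_N^{T}$ term tend to zero as $N\to\infty$, leaving $\tilde F_G \approx F_G + \tfrac{\rho\tilde N}{2}I_{2N}$ and hence the same trace formula as in Theorem~2. You correctly notice that this is not quite enough at the level of operator norm, since $\tfrac{1}{\rho(N+M)}\mathbf{1}_N\mathbf{1}_N^{T}$ has spectral norm of order~$1$, not $o(1)$; your Sherman--Morrison--Woodbury route is the natural way to make the step rigorous, exploiting that the perturbation is rank two and therefore contributes only $O(1)$ to a trace of $2N$ positive terms.

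One small correction to your scaling bookkeeping: with $u=\sqrt{\tilde N/(2(N+M))}\,[\mathbf{1}_N;\mathbf{0}]$ you have $\|u\|^2=\tfrac{\tilde N\,N}{2(N+M)}=O(1)$, so both $u^{T}A^{-1}u$ and $u^{T}A^{-2}u$ are $O(1)$ rather than $O(N)$. The Woodbury correction to the trace is therefore an $O(1)/(1-O(1))$ quantity, still $O(1)$ as you need, but the mechanism is that $\|u\|$ is bounded, not that a ratio of two $\Theta(N)$ quantities stays bounded. Note also the sign: since the perturbation is $-uu^{T}$, the Woodbury denominator is $1-u^{T}A^{-1}u$, so finishing the argument requires checking that this stays bounded away from zero (equivalently, that subtracting the rank-two block does not drive $\tilde F_G$ toward singularity). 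This is precisely where the full-rank hypothesis on $F$ does its work, and it is the one point the paper's informal ``drop the $1/N$ terms'' argument leaves unaddressed.
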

The result quantifies the benefits of co-operation between agents. The agents can be interpreted as {\em virtual anchors} in the network that help localize the other nodes in the system.

Now let us consider the setup where the agents are mobile and we have some estimates of their velocities at each time instant from their INS readings. The vector of parameters is taken as follows  ${\bf \eta} = \left[
\begin{array}{rl} {\bf u}_{R_1};  {\bf u}_{I_1}; ....;{\bf u}_{R_T}; {\bf u}_{I_T} \end{array}\right]$, where ${\bf u_t}$ is the vector of vehicle locations at time $t$ and we consider $T$ time instants. Assuming that each reading $\theta_t(km)$ is sampled independently from the noise distribution $p_{NOISE}(.)$ and that the velocity measurement $s_t(m)$ is given by,
$s_t(m) = u_t(m) - u_{t-1}(m) + w_t(m),$
where $w_t(m) \sim {\cal N}(0,\sigma_{INS}^2)$,
 the following theorem holds true.
 
 \begin{thm}[{Mobile case:}]
The Fisher Information Matrix $F$ is given by 
$F = g(p_{NOISE}) F_G +\frac{1}{ \sigma_{INS}^2} F_{INS},$
under the assumption that $p_{NOISE}$ is differentiable over its support $[LL,UL]$ and $p(UL) - p(LL) = 0$.
 \end{thm}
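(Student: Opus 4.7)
The plan is to exploit the statistical independence between the ranging measurements $\{\theta_t(km)\}$ and the INS velocity measurements $\{s_t(m)\}$, since the former depend on the pairwise locations through the noise $n_t(km)\sim p_{NOISE}$ and the latter depend on consecutive locations through an independent Gaussian noise $w_t(m)\sim\mathcal{N}(0,\sigma_{INS}^2)$. Because the observation set factors into these two independent families, the joint log-likelihood splits as $\ln p(\Theta,S\,|\,\eta)=\ln p(\Theta\,|\,\eta)+\ln p(S\,|\,\eta)$, and the Fisher Information Matrix inherits the additive decomposition $F = F_{\Theta} + F_{S}$ by the standard argument that the cross-terms vanish in expectation.

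First, for the ranging contribution $F_{\Theta}$, I would stack the parameter vector as $\eta=[{\bf u}_{R_1};{\bf u}_{I_1};\ldots;{\bf u}_{R_T};{\bf u}_{I_T}]$ and note that each $\theta_t(km)$ depends only on $u_t(k)$ and $u_t(m)$. Consequently, the score $\partial\ln p(\Theta\,|\,\eta)/\partial \eta$ has a block-wise time structure, and $F_{\Theta}$ becomes block-diagonal across time, with each diagonal block being the static FIM from Section V. Applying Theorem 1 (the separation principle) at each time slice, and using the i.i.d.\ assumption on the readings, every block factors as $g(p_{NOISE})$ times a geometry matrix. Collecting the blocks yields $F_{\Theta}=g(p_{NOISE})\,F_G$, where $F_G$ is the block-diagonal aggregate of the per-time-slice geometry matrices.

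Next, for the INS contribution $F_{S}$, the log-likelihood is the quadratic form $-\tfrac{1}{2\sigma_{INS}^2}\sum_{t,m}\|s_t(m)-(u_t(m)-u_{t-1}(m))\|^2$ up to constants. Differentiating twice with respect to the real and imaginary components of the $u_t(m)$'s (or equivalently taking the outer product of the scores and using $\mathbb{E}[w_t(m)]=0$, $\mathrm{Var}(w_t(m))=\sigma_{INS}^2$) produces a matrix of the form $\frac{1}{\sigma_{INS}^2}F_{INS}$, where $F_{INS}$ is the purely combinatorial tridiagonal-in-time incidence matrix arising from the differencing operator $u_t(m)-u_{t-1}(m)$ (separately for the real and imaginary coordinates, so that it decouples into two identical blocks). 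Adding the two pieces yields the claimed identity.

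The main obstacle is bookkeeping rather than conceptual: verifying that the cross-terms between the ranging score and the INS score indeed vanish in expectation, which requires the same boundary condition $p(UL)-p(LL)=0$ used in Theorem 1 (so that $\mathbb{E}[p'(n)/p(n)]=0$) combined with $\mathbb{E}[w_t(m)]=0$. Once that is in place, the additive split of $F$ is immediate, Theorem 1 handles the $\Theta$-part verbatim, and the Gaussian INS part is a direct computation, completing the proof.
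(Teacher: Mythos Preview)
Your proposal is correct and follows essentially the same approach as the paper: split the log-likelihood by independence of ranging and INS observations so that $F$ is additive, observe that the ranging block is block-diagonal in time and invoke Theorem~1 per slice to extract $g(p_{NOISE})$, and compute the Gaussian INS block as $\sigma_{INS}^{-2}$ times a tridiagonal-in-time incidence matrix. The paper's Appendix~C does exactly this, though it states the vanishing of the cross-terms more tersely (``it is easy to see'') whereas you explicitly trace it to $\mathbb{E}[p'(n)/p(n)]=0$ and $\mathbb{E}[w_t(m)]=0$.
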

 \begin{proof}
 Appendix C.
 \end{proof}
 $F_G$ is a matrix that only depends on the node locations at different time instants (see Appendix D) and $F_{INS}$ is a constant matrix. 
  
  \section{Simulation results}
  \label{sec:sim2}

  The simulation setup consisted of vehicles on a four lane road with lane width as 4m and lane length of 100m. Eight vehicles were simulated in each lane moving with an average velocity of $30mph$. Two adjacent lanes had vehicles moving in one direction and two others had vehicles moving in the opposite direction.  The vehicles do not change their lanes. Vehicles broadcast messages every 100ms (in accordance with the DSRC standard). Each message contains the vehicle position estimate and an estimate of the variance in its location estimate. In addition, vehicles also get ranging estimates with their neighbors which can be obtained using short probe packets that are typically a few symbols long. \\

   \begin{figure}
 \centering
 \includegraphics[height = 2.5in]{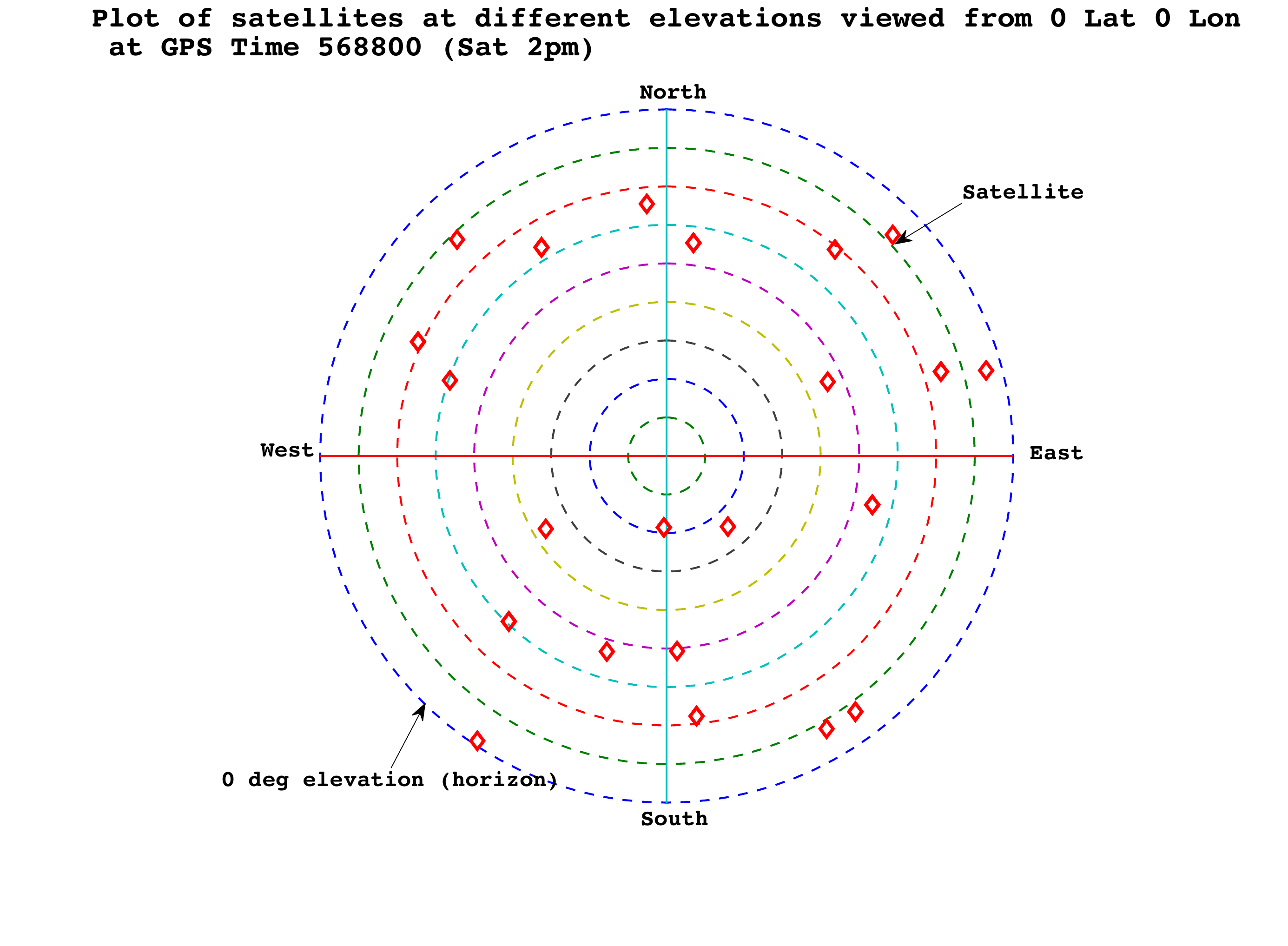}
 \caption{Satellite positions at different elevation angles viewed from 0 deg Lat and 0 deg Lon at GPS time 568800s (saturday 2pm). The different rings are increasing elevations of 10 degrees with the outermost ring corresponding to the horizon at 0 deg.}
 \label{fig:Skyplot}
 \vspace*{0in} \end{figure}  

Unless specified otherwise, the following parameters are used for the simulations. The radius of communication is assumed to be 50m (typically 50m-100m is used in these applications). GPS satellites are used as anchors. The mask angle for each vehicle at every time instant is taken to be uniform in the range 55 degrees to 85 degrees. This gives an average visibility of one satellite over every two time instants for each vehicle. Satellite locations are simulated from their almanac corresponding to the GPS time of week 568800s (UTC, saturday 2pm). Fig. \ref{fig:Skyplot}, shows the positions of different GPS satellites as seen from 0 deg Lat 0 deg Lon position for a 0 deg mask angle. The different rings correspond to different elevation angles spaced 10 degrees apart with the outermost ring being 0 degrees. The noise model is taken the be same as in section \ref{sec:GraphModel}. The LOS thermal noise standard deviation in the GPS readings is taken as $10m$ (no ionospheric/tropospheric biases assumed). The standard deviation in the NLOS noise is taken to be 5m to capture nearby reflectors such as  other vehicles and far away reflectors like buildings on the roadside. The INS noise is assumed to have a standard deviation of $1m/s$ \cite{rezaei2007kalman}. The algorithm was initialized with 2000 particles for each vehicle. In all the plots that follow, we will show three curves.  One that corresponds to the mean performance of the algorithm with the 90 percentile error bars. The second curve is the CRLB evaluated given all the measurements. The third bound is the ``causal'' CRLB computed by considering for parameters at each time instant $t$ (i.e. $\{u_t(k)\}_{k=1}^N$) only the measurements until time $t$. The y-axis in all the plots is the localization error (m). \\
        \begin{figure}
 \centering
 \includegraphics[height = 2.5in]{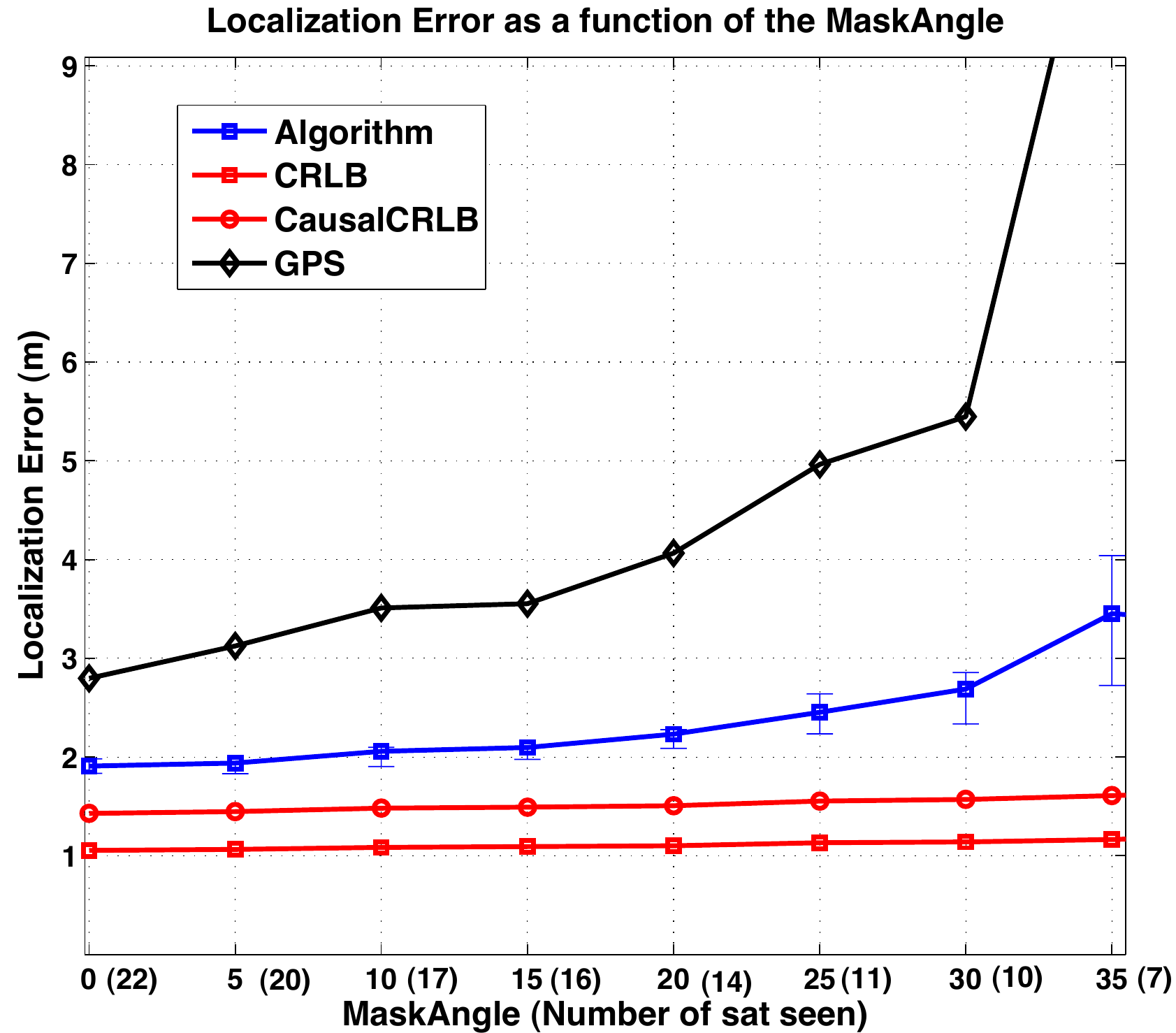}
 \caption{Performance of the algorithm as a function of the mask angle (number of satellites seen). $\sigma_{sat} = 10m, $$\sigma_{R} = 2m$, $\sigma_{INS} = 5m/s$, $\alpha = 0.5$, radius of communication $= 10m \sim 5$ neighbors.}
 \label{fig:ErrorMaskAngle}
 \vspace*{0in} \end{figure}
 
 
       \begin{figure}
 \centering
 \includegraphics[height = 2.5in]{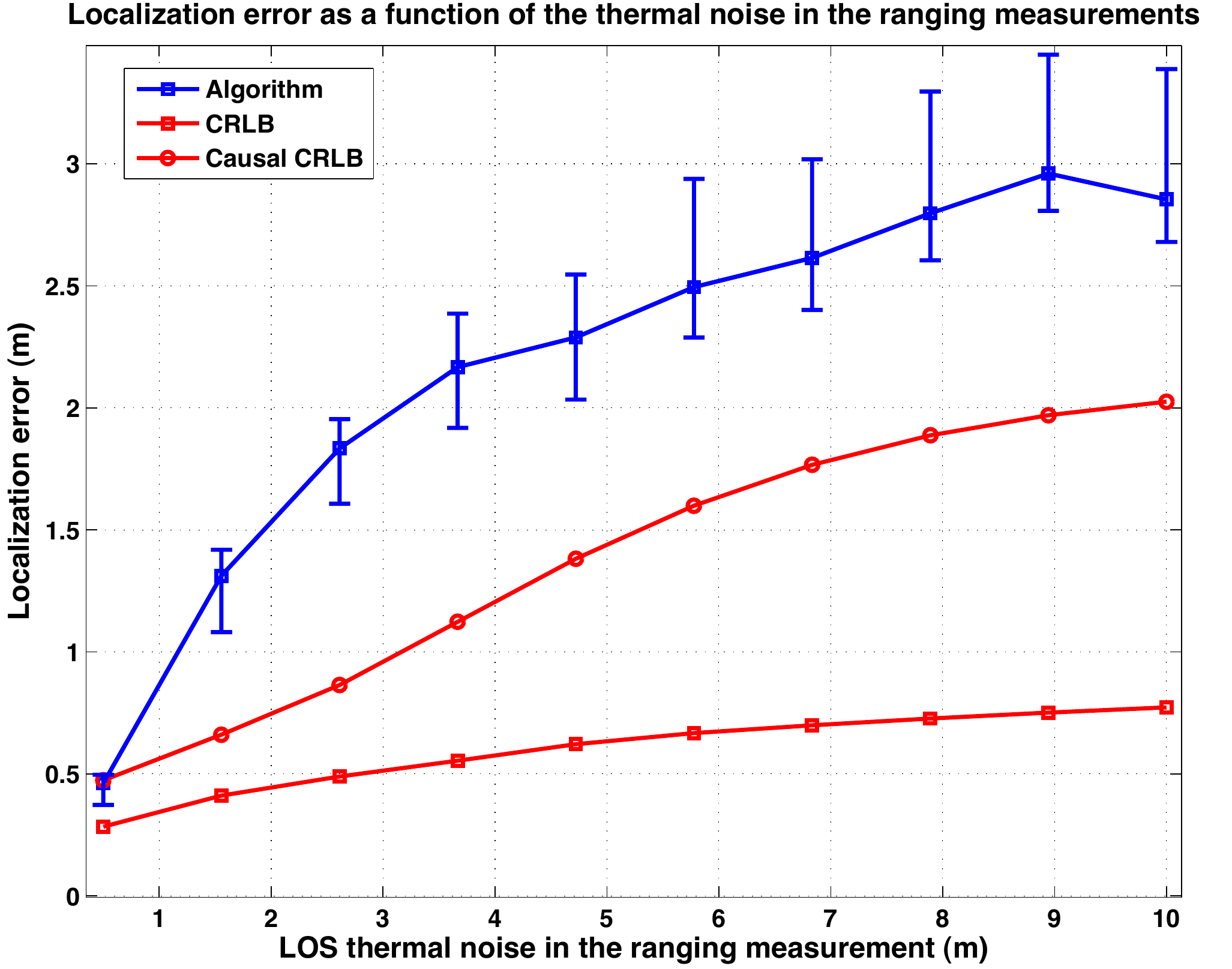}
 \caption{Performance of the algorithm as a function of the ranging noise. $\sigma_{sat} = 10m$, $\sigma_{INS} = 1m/s$, $\alpha = 0.5$.}
 \label{fig:ErrorRanging}
 \vspace*{0in} \end{figure}
 
 Fig. \ref{fig:ErrorMaskAngle} shows the error performance as a function of the mask angle (equivalently number of visible satellites) as compared against a standard least squares GPS receiver algorithm which is significantly worse off at lower satellite visibility mainly due to the NLOS nature of the measurements. In this simulation, we also simulated the Galileo satellite constellation to have a larger dynamic range of visible satellites. Fig. \ref{fig:ErrorRanging} shows the error performance as a function of the thermal noise in the ranging measurement between vehicles. At very low values of $\sigma_R$, the performance of the algorithm is close to the lower bound. However as $\sigma_R$ increases, the performance quickly deviates from the bound though it follows the scaling trend without diverging.\\
    \begin{figure}
 \centering
 \includegraphics[height = 2.5in]{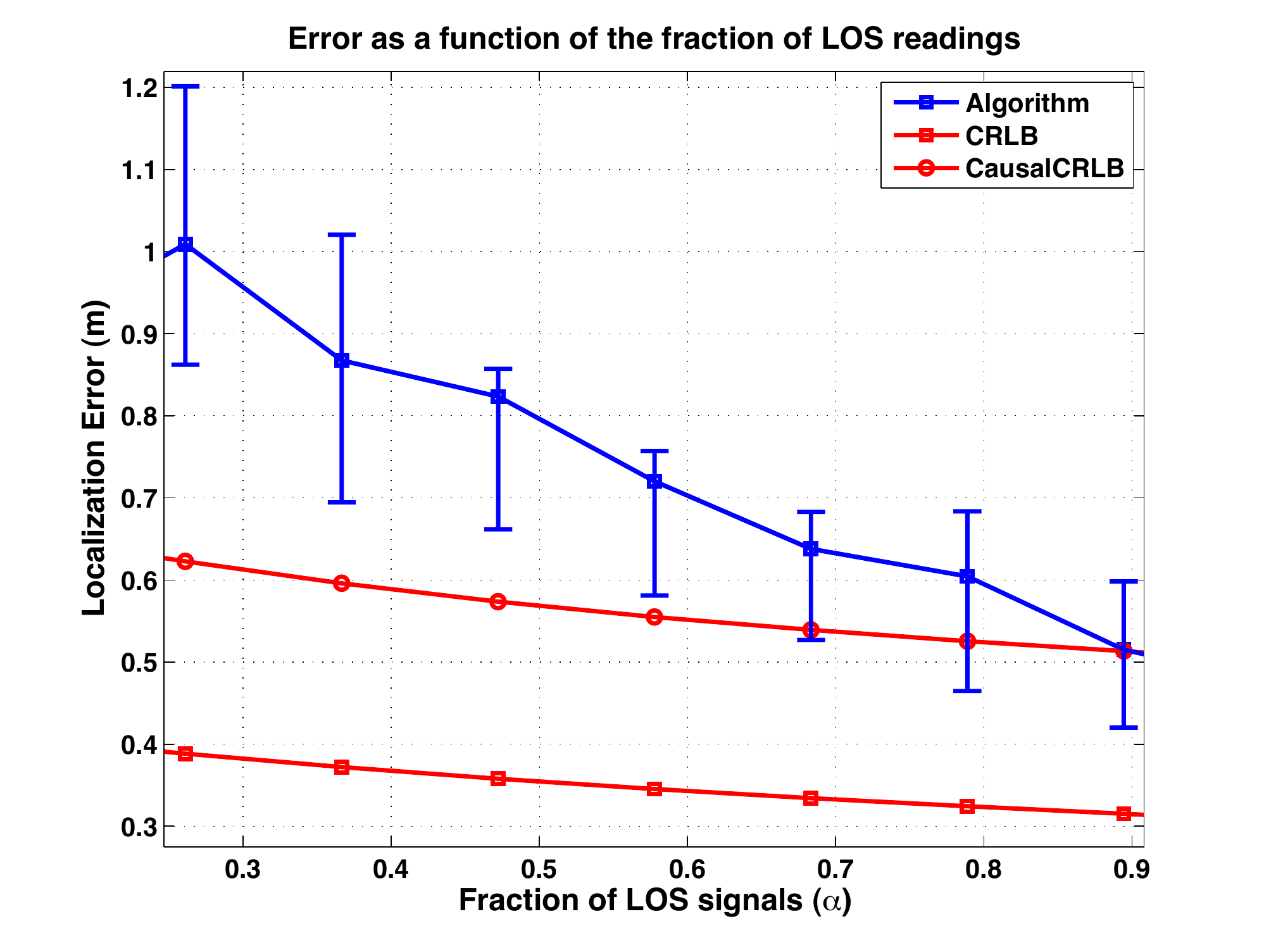}
 \caption{Performance of the algorithm as a function of the fraction of LOS measurements. $\sigma_{sat} = 10m$, $\sigma_R = 1m$, $\sigma_{INS} = 1m/s$.}
 \label{fig:ErrorAlpha}
 \vspace*{0in} \end{figure}
 
    \begin{figure}
 \centering
 \includegraphics[height = 2.5in]{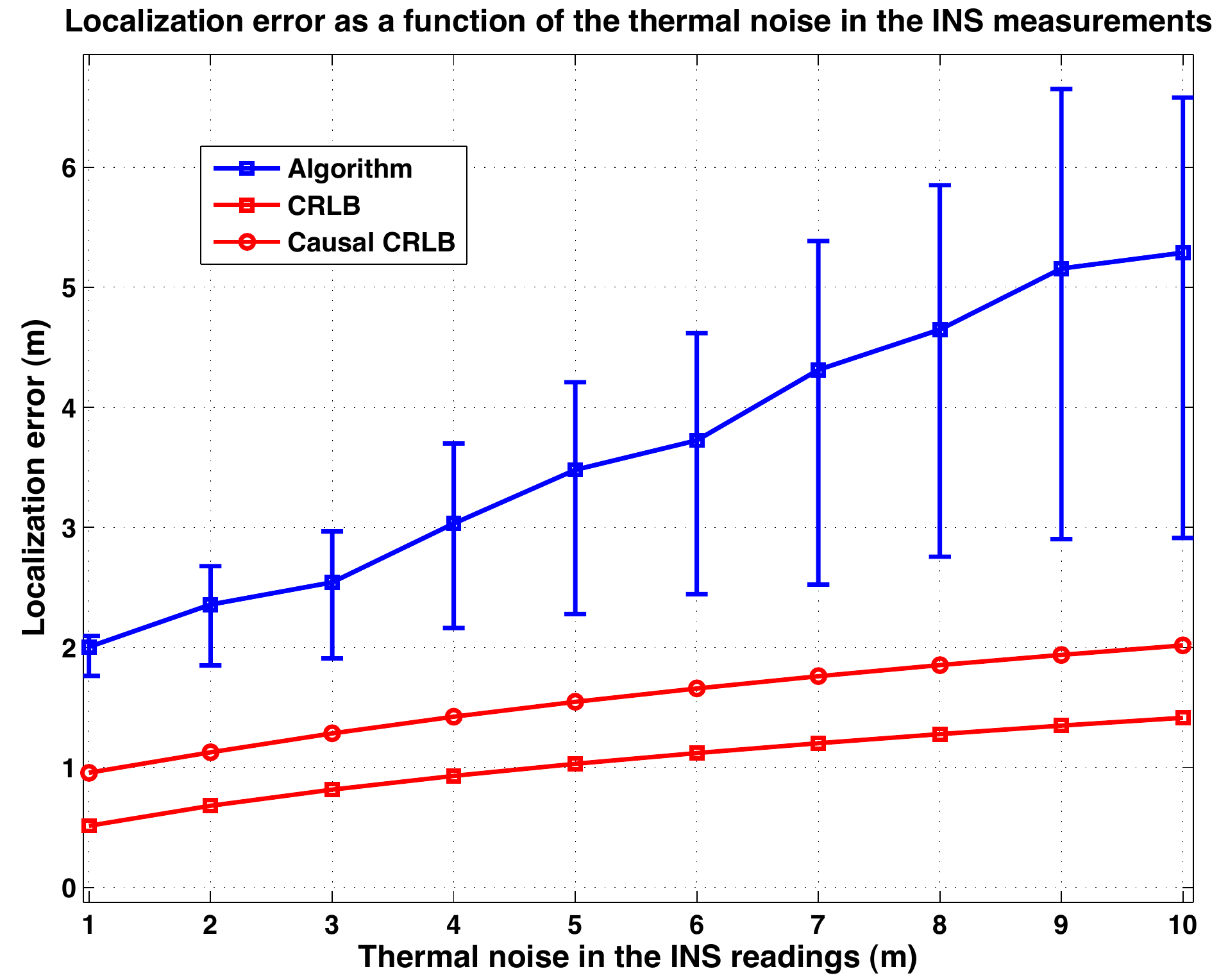}
 \caption{Performance of the algorithm as a function of thermal noise in the INS measurements. $\sigma_{sat} = 10m$, $\sigma_R = 3m$, $\alpha = 0.5$.}
 \label{fig:ErrorINS}
 \vspace*{0in} \end{figure}
 
  Fig. \ref{fig:ErrorAlpha} is a plot of the error performance as a function of the fraction of LOS measurements. The errors are seen to be quite reasonable over a wide range of $\alpha$. As before, the performance is close to the bound at higher signal-to-noise ratio i.e. large $\alpha$. Fig. \ref{fig:ErrorINS} shows the performance of the algorithm as a function of the measurement noise in the INS readings. Clearly the performance diverges from the bound as the measurement noise becomes large. This is an inherent limitation of the algorithm given that each vehicle uses an estimate of the neighbor's location extrapolated using only the INS measurements. Given that the position has to sustain until enough LOS measurements are gathered, a larger INS error cannot be tolerated by the algorithm. The hope is that the INS measurement error is small enough in practical systems for the algorithm to work. Vehicle INS sensors can have error $< 1m/s$ \cite{rezaei2007kalman} which can be tolerated by the algorithm.\\
 
   \begin{figure}
 \centering
 \includegraphics[height = 2.5in]{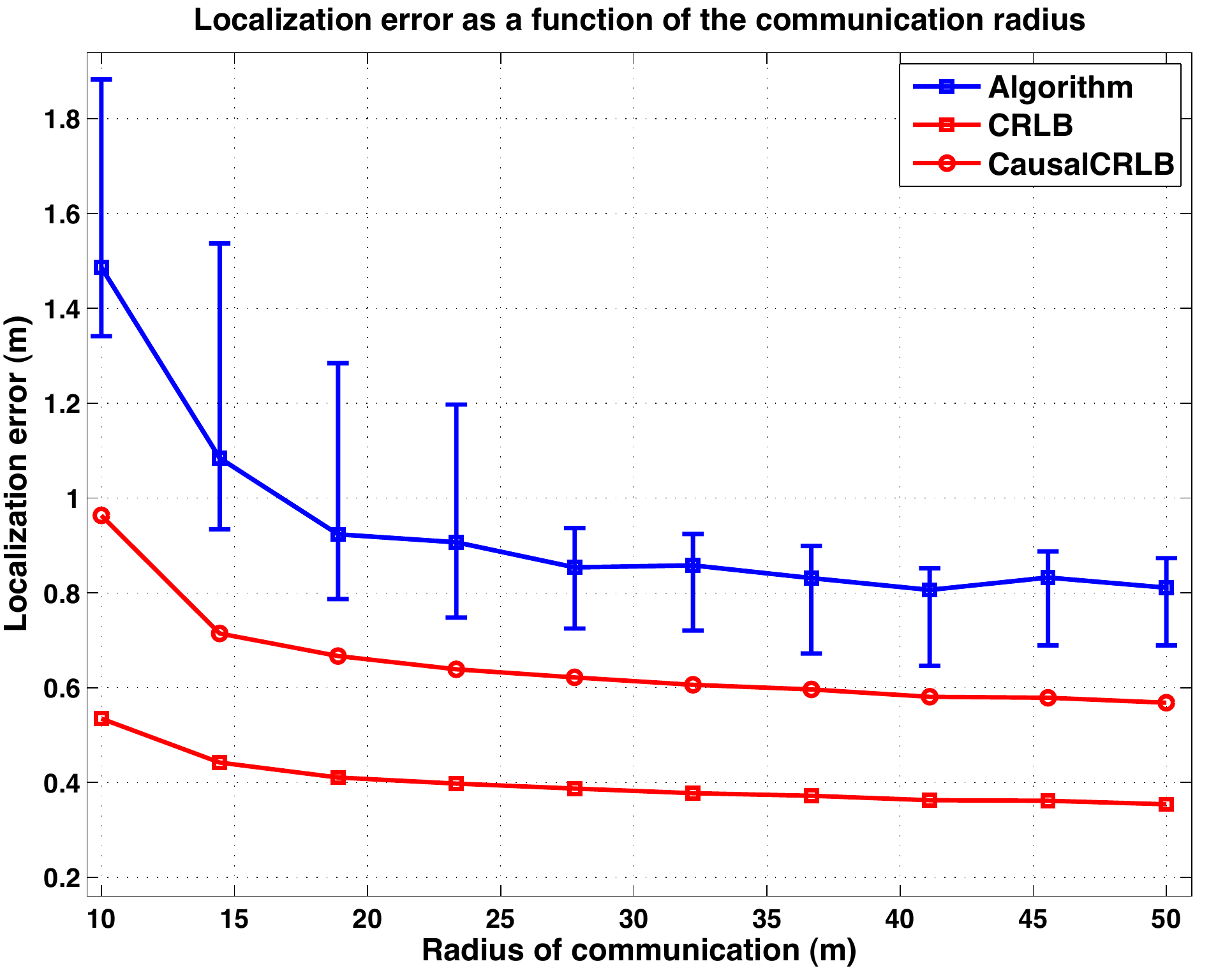}
 \caption{Performance of the algorithm as a function of the radius of communication. $\sigma_{sat} = 10m$, $\sigma_R = 1m$, $\sigma_{INS} = 1m/s, \alpha = 0.5$.}
 \label{fig:ErrorVeh}
 \vspace*{0in} \end{figure}
 
 Fig. \ref{fig:ErrorVeh} is a plot of the performance of the algorithm as a function of the communication radius. Essentially we are interested in seeing the performance improvement as a function of the increase in the number of neighbors of a node. As expected there is diminishing returns with increased radius. Having a large communication radius is good for localization. However this leads to decreased spatial reuse of the spectrum, thereby leading to a tradeoff. One could think of an adaptive system wherein the vehicles increase the communication radius in harsh environments like urban canyons while reduce it in open sky environments where the number of visible satellites is larger.\\
 
\begin{table}[htbp]
   \centering
   \begin{tabular}{cc} 
      \multicolumn{2}{c}{} \\
      \hline
      Ranging thermal Noise (m)    & Average $N_{eff}$\\
      \hline
      1      &  84.3 \\
        2        &   162.2 \\
      3       &  245.1 \\
      4       &  306.3 \\
      5 &   402.2 \\
      \hline
   \end{tabular}
   \caption{Effective number of particles ($N_{eff}$) as a function of the ranging noise}
   \label{tab:Neff}
\end{table}
 
 \begin{figure}
 \centering
 \includegraphics[height = 3in]{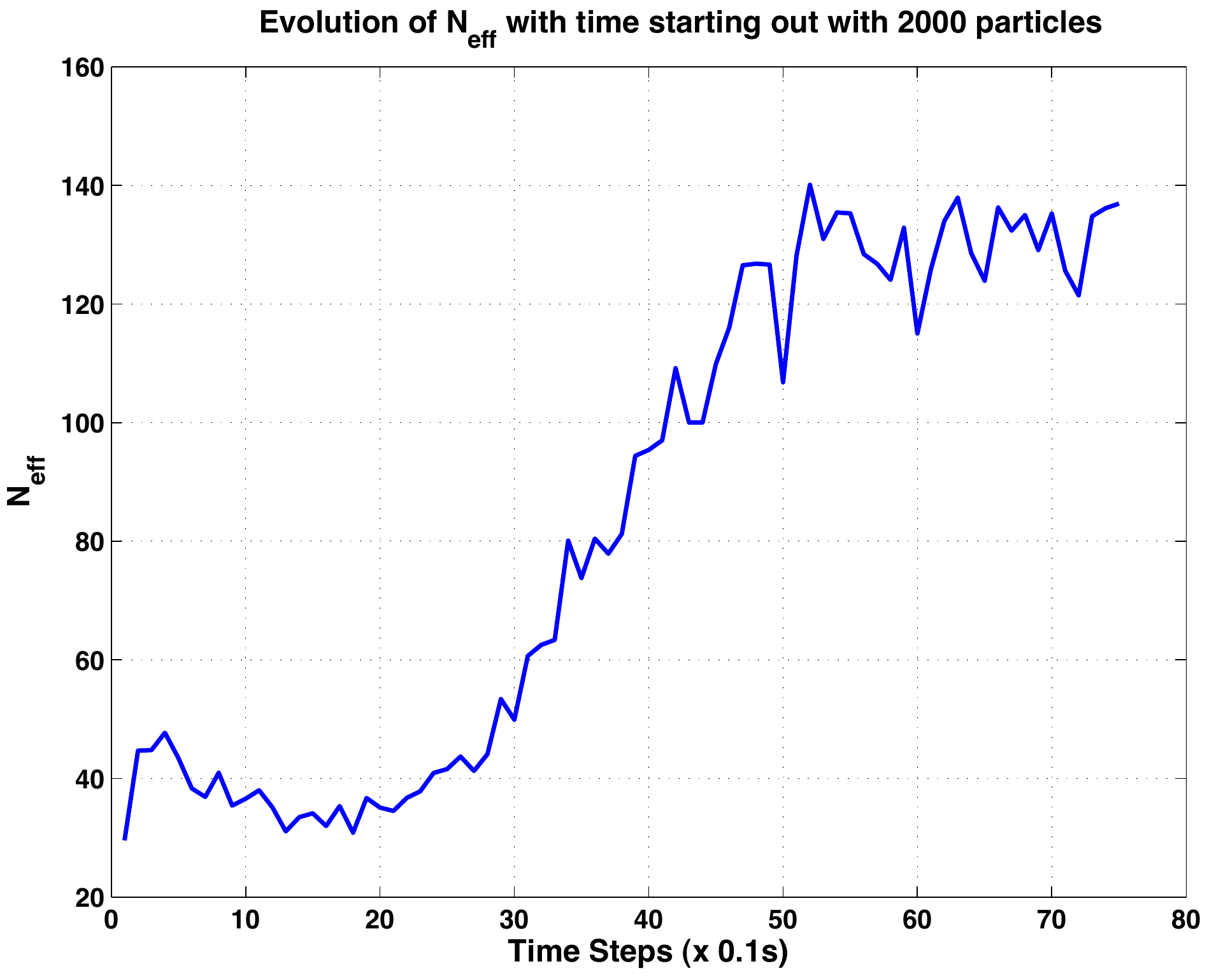}
 \caption{Effective number of particles ($N_{eff}$) as a function of time.}
 \label{fig:Neff}
 \vspace*{0in} \end{figure} 

 Particle filtering inherently has the problem of degeneracy where the number of distinct particles reduce with time. Thus it is important to understand the evolution of distinct particles for the problem setup.  We used 2000 particles to start with and the effective number of particles is around 100 depending on the parameters of the simulation. We set a threshold of 30 before we resample. Table \ref{tab:Neff} and Fig. \ref{fig:Neff}show the variation of $N_{eff}$ as a function of different parameters. For a larger noise in general, $N_{eff}$ is larger as expected since more number of particles will have a significant weight given the larger uncertainty in the position. However clearly the error is also larger. A smaller noise gives better error performance but the degeneracy problem kicks in earlier which would later adversely affect the error performance. $N_{eff}$ in  Fig. \ref{fig:Neff} increases over time since, after resampling we would have a larger number of distinct particles closer to the mean value and thereby each of them will have a significant weight.\\
 
 \begin{figure}[h!]
\centering
\includegraphics[height=3in]{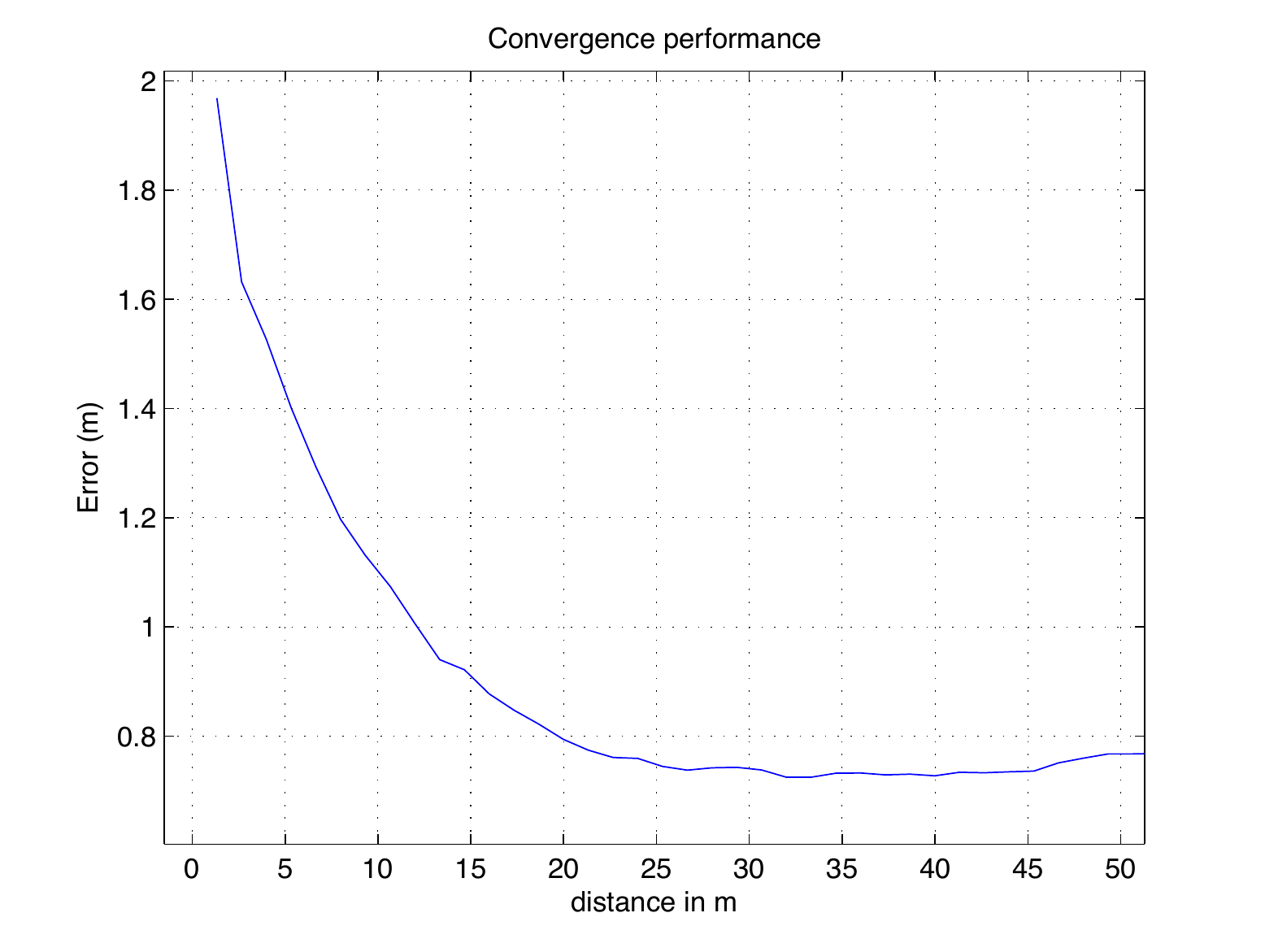}
\caption{Error convergence as a function of distance.}
\label{fig:conv}
\end{figure}
\begin{figure}[h!]
\centering
\includegraphics[height=3in]{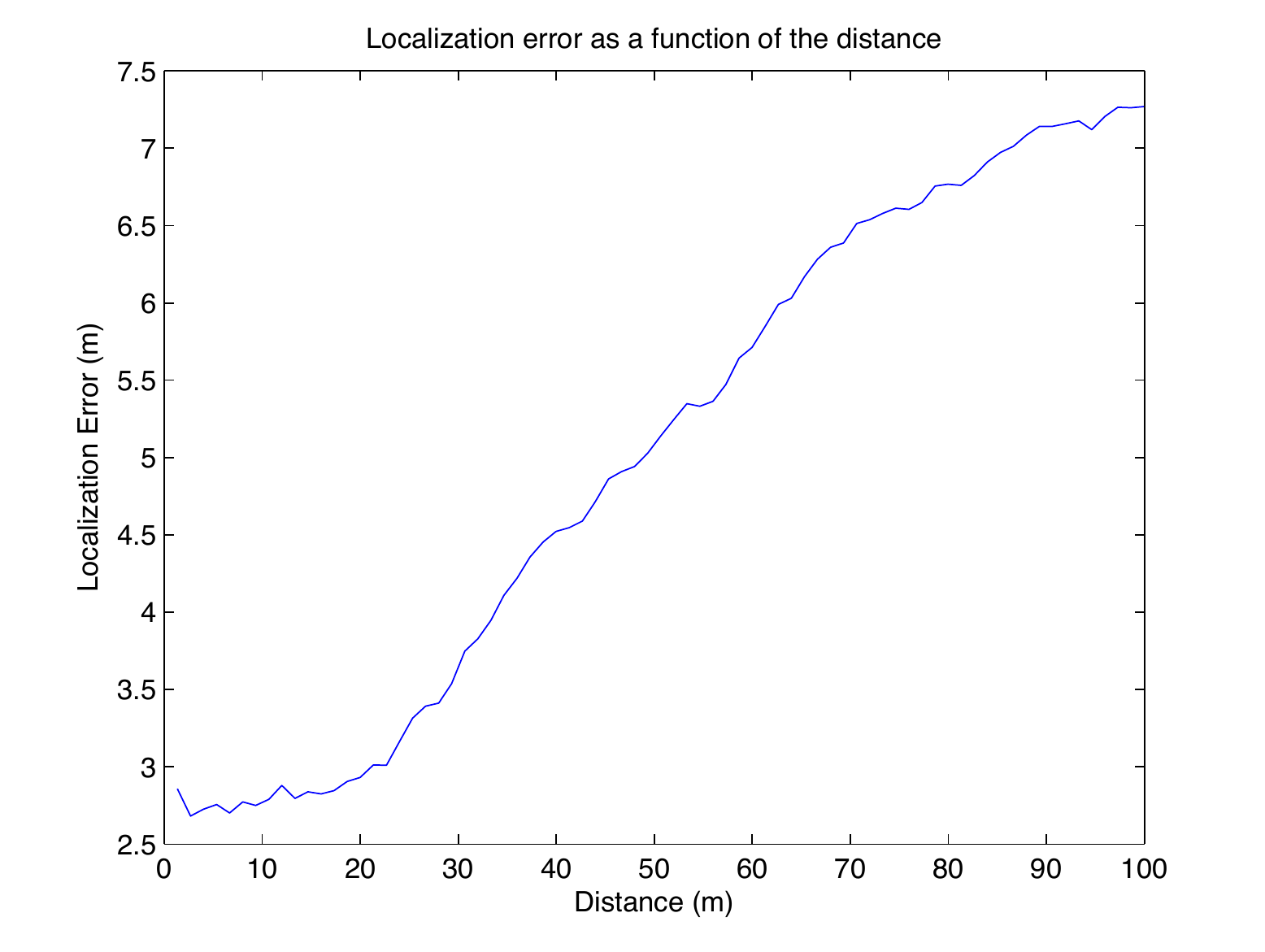}
\caption{Algorithm diverges when the noise is large.}
\label{fig:div}
\end{figure}

 Given the approximate nature of the algorithm, it is likely that the algorithm can diverge for certain parameters of the problem setup. Hence it is important to see how well the errors converge and when would the performance diverge. Fig. \ref{fig:conv} shows the error convergence as a function of the distance (equivalently time steps) for $\sigma_R = 1m$, $\sigma_{sat} = 10m$, $\sigma_{INS} = 1m/s$, $\alpha = 0.5$ and radius of communication $ = 50m$.  Fig. \ref{fig:conv}  shows the error as a function of the distance for the case when $\sigma_R = 3m$, $\sigma_{INS} = 10m/s$. Clearly the algorithm diverges for a larger noise in the INS readings. This is due to the inherent nature of the algorithm, since each vehicle extrapolates the neighbor's location using the INS measurements and uses that to compute its own belief. Thus a larger INS noise can lead to divergence.

 \section{Conclusion and Future work}
In this work we explored the application of particle filtering to get estimates of vehicle locations in a highly NLOS environment. We derived weight update equations for the NLOS setting and simulation results show that reasonably good accuracies in positioning is feasible. The approximation in the graphical model could break down above a certain noise threshold and below a certain anchor density, and the algorithm could potentially diverge. A theoretical understanding of when the algorithm diverges is another research direction, though we believe this to be a hard problem. We also explored the behavior of the localization error as a function of the number of anchors, vehicles and the fraction of LOS measurements by analyzing the CRLB. The performance of the proposed algorithm was evaluated in comparison to the derived bound and was shown that the algorithm respects the scaling behavior as predicted by the bound. One can see that the errors converge in around 10 steps.

%

\bibliographystyle{IEEEtran}
\bibliography{JournalCoopLoc}

\begin{thebibliography}{10}
\providecommand{\url}[1]{#1}
\csname url@samestyle\endcsname
\providecommand{\newblock}{\relax}
\providecommand{\bibinfo}[2]{#2}
\providecommand{\BIBentrySTDinterwordspacing}{\spaceskip=0pt\relax}
\providecommand{\BIBentryALTinterwordstretchfactor}{4}
\providecommand{\BIBentryALTinterwordspacing}{\spaceskip=\fontdimen2\font plus
\BIBentryALTinterwordstretchfactor\fontdimen3\font minus
  \fontdimen4\font\relax}
\providecommand{\BIBforeignlanguage}[2]{{%
\expandafter\ifx\csname l@#1\endcsname\relax
\typeout{** WARNING: IEEEtran.bst: No hyphenation pattern has been}%
\typeout{** loaded for the language `#1'. Using the pattern for}%
\typeout{** the default language instead.}%
\else
\language=\csname l@#1\endcsname
\fi
#2}}
\providecommand{\BIBdecl}{\relax}
\BIBdecl

\bibitem{wilson1998potential}
C.~Wilson, S.~Rogers, and S.~Weisenburger, ``{The potential of precision maps
  in intelligent vehicles},'' in \emph{IEEE International Conference on
  Intelligent Vehicles}.\hskip 1em plus 0.5em minus 0.4em\relax Citeseer, 1998,
  pp. 419--422.

\bibitem{nerurkar2009distributed}
E.~Nerurkar, S.~Roumeliotis, and A.~Martinelli, ``{Distributed maximum a
  posteriori estimation for multi-robot cooperative localization},'' in
  \emph{IEEE International Conference on Robotics and Automation, 2009.
  ICRA'09}, 2009, pp. 1402--1409.

\bibitem{chen1999non}
P.~Chen, ``{A non-line-of-sight error mitigation algorithm in location
  estimation},'' in \emph{1999 IEEE Wireless Communications and Networking
  Conference, 1999. WCNC}, 1999, pp. 316--320.

\bibitem{rizos2002network}
C.~Rizos, ``Network rtk research and implementation: A geodetic perspective,''
  \emph{Journal of Global Positioning Systems}, vol.~1, no.~2, pp. 144--150,
  2002.

\bibitem{kenney2011dedicated}
J.~Kenney, ``Dedicated short-range communications (dsrc) standards in the
  united states,'' \emph{Proceedings of the IEEE}, vol.~99, no.~7, pp.
  1162--1182, 2011.

\bibitem{patwari2005locating}
N.~Patwari, J.~Ash, S.~Kyperountas, A.~Hero~III, R.~Moses, and N.~Correal,
  ``{Locating the nodes: cooperative localization in wireless sensor
  networks},'' \emph{IEEE Signal Processing Magazine}, vol.~22, no.~4, pp.
  54--69, 2005.

\bibitem{parker2007cooperative}
R.~Parker and S.~Valaee, ``Cooperative vehicle position estimation,'' in
  \emph{Communications, 2007. ICC'07. IEEE International Conference on}.\hskip
  1em plus 0.5em minus 0.4em\relax IEEE, 2007, pp. 5837--5842.

\bibitem{alam2011dsrc}
N.~Alam, A.~Tabatabaei~Balaei, and A.~Dempster, ``A dsrc doppler-based
  cooperative positioning enhancement for vehicular networks with gps
  availability,'' \emph{Vehicular Technology, IEEE Transactions on}, no.~99,
  pp. 1--1, 2011.

\bibitem{ananthasubramaniam2008cooperative}
B.~Ananthasubramaniam and U.~Madhow, ``{Cooperative localization using angle of
  arrival measurements in non-line-of-sight environments},'' in
  \emph{Proceedings of the first ACM international workshop on Mobile entity
  localization and tracking in GPS-less environments}.\hskip 1em plus 0.5em
  minus 0.4em\relax ACM, 2008, pp. 117--122.

\bibitem{casas2006robust}
R.~Casas, A.~Marco, J.~Guerrero, and J.~Falc{\'o}, ``{Robust estimator for
  non-line-of-sight error mitigation in indoor localization},'' \emph{EURASIP
  Journal on Applied Signal Processing}, vol.~17, 2006.

\bibitem{ismail2007nlos}
G.~{\.I}smail, C.~Chia-Chin, W.~Fujio, I.~Hiroshi \emph{et~al.}, ``Nlos
  identification and weighted least-squares localization for uwb systems using
  multipath channel statistics,'' \emph{EURASIP Journal on Advances in Signal
  Processing}, vol. 2008, 2007.

\bibitem{khodjaev2010survey}
J.~Khodjaev, Y.~Park, and A.~Saeed~Malik, ``Survey of nlos identification and
  error mitigation problems in uwb-based positioning algorithms for dense
  environments,'' \emph{Annals of Telecommunications}, vol.~65, no.~5, pp.
  301--311, 2010.

\bibitem{wymeersch2009cooperative}
H.~Wymeersch, J.~Lien, and M.~Win, ``{Cooperative localization in wireless
  networks},'' \emph{Proceedings of the IEEE}, vol.~97, no.~2, pp. 427--450,
  2009.

\bibitem{seow2008non}
C.~Seow and S.~Tan, ``{Non-Line-of-Sight Localization in Multipath
  Environments},'' \emph{IEEE Transactions on Mobile Computing}, pp. 647--660,
  2008.

\bibitem{quek}
M.~Leng, W.~P. Tay, and T.~Quek, ``Cooperative and distributed localization for
  wireless sensor networks in multipath environments,'' in \emph{Acoustics,
  Speech and Signal Processing (ICASSP), 2012 IEEE International Conference
  on}, march 2012, pp. 3125 --3128.

\bibitem{ding2012}
G.~Ding, Z.~Tan, L.~Zhang, Z.~Zhang, and J.~Zhang, ``Hybrid toa/aoa cooperative
  localization in non-line-of-sight environments,'' in \emph{Vehicular
  Technology Conference (VTC Spring), 2012 IEEE 75th}, may 2012, pp. 1 --5.

\bibitem{wymeersch2012}
S.~Van~de Velde, H.~Wymeersch, and H.~Steendam, ``Comparison of message passing
  algorithms for cooperative localization under nlos conditions,'' in
  \emph{Positioning Navigation and Communication (WPNC), 2012 9th Workshop on},
  march 2012, pp. 1 --6.

\bibitem{abbas2012measurement}
T.~Abbas, F.~Tufvesson, and J.~Karedal, ``Measurement based shadow fading model
  for vehicle-to-vehicle network simulations,'' \emph{arXiv preprint
  arXiv:1203.3370}, 2012.

\bibitem{venkymc}
V.~N. Ekambaram and K.~Ramchandran, ``Non-line-of-sight localization using
  low-rank + sparse matrix decomposition,'' in \emph{Statistical Signal
  Processing Workshop (SSP), 2012 IEEE}, aug. 2012, pp. 317 --320.

\bibitem{savvides2003error}
A.~Savvides, W.~Garber, S.~Adlakha, R.~Moses, and M.~Srivastava, ``{On the
  error characteristics of multihop node localization in ad-hoc sensor
  networks},'' in \emph{IPSN 2003}.\hskip 1em plus 0.5em minus 0.4em\relax
  Springer-Verlag, 2003, pp. 317--332.

\bibitem{botteron2004cramer}
C.~Botteron, A.~Host-Madsen, and M.~Fattouche, ``{Cramer-Rao bounds for the
  estimation of multipath parameters and mobiles' positions in asynchronous
  DS-CDMA systems},'' \emph{Signal Processing, IEEE Transactions on}, vol.~52,
  no.~4, pp. 862--875, 2004.

\bibitem{chang2004estimation}
C.~Chang and A.~Sahai, ``{Estimation bounds for localization},'' in \emph{IEEE
  SECON 2004}, 2004, pp. 415--424.

\bibitem{weiss2008improvement}
A.~Weiss and J.~Picard, ``{Improvement of Location Accuracy by Adding Nodes to
  Ad-Hoc Networks},'' \emph{Wireless Personal Communications}, vol.~44, no.~3,
  pp. 283--294, 2008.

\bibitem{qi2006analysis}
Y.~Qi, H.~Kobayashi, and H.~Suda, ``{Analysis of wireless geolocation in a
  non-line-of-sight environment},'' \emph{IEEE Trans. on Wireless Comm.},
  vol.~5, no.~3, 2006.

\bibitem{shen2009fundamental2}
Y.~Shen, H.~Wymeersch, and M.~Win, ``{Fundamental limits of wideband
  localization-part II: Cooperative networks},'' \emph{IEEE Trans. Inf. Th.},
  2009.

\bibitem{saleh1987statistical}
A.~Saleh and R.~Valenzuela, ``{A statistical model for indoor multipath
  propagation},'' \emph{Selected Areas in Communications, IEEE Journal on},
  vol.~5, no.~2, pp. 128--137, 1987.

\bibitem{ekambaram2012cooploc}
V.~Ekambaram, K.~Ramchandran, and R.~Sengupta, ``Collaborative high accuracy
  localization in mobile multipath environments,''
  \emph{http://arxiv.org/pdf/1011.6075.pdf}, 2012.

\bibitem{ekambaram2010distributed}
V.~Ekambaram and K.~Ramchandran, ``Distributed high accuracy peer-to-peer
  localization in mobile multipath environments,'' in \emph{GLOBECOM 2010, 2010
  IEEE Global Telecommunications Conference}.\hskip 1em plus 0.5em minus
  0.4em\relax IEEE, 2010, pp. 1--5.

\bibitem{DBLP:conf/globecom/EkambaramRS11}
V.~N. Ekambaram, K.~Ramchandran, and R.~Sengupta, ``Scaling laws for
  cooperative node localization in non-line-of-sight wireless networks,'' in
  \emph{GLOBECOM}, 2011, pp. 1--5.

\bibitem{renzo2006ultra}
M.~Renzo, F.~Graziosi, R.~Minutolo, M.~Montanari, and F.~Santucci, ``{The
  ultra-wide bandwidth outdoor channel: from measurement campaign to
  statistical modelling},'' \emph{Mobile Networks and Applications}, vol.~11,
  no.~4, pp. 451--467, 2006.

\bibitem{turin1972statistical}
G.~Turin, F.~Clapp, T.~Johnston, S.~Fine, and D.~Lavry, ``{A statistical model
  of urban multipath propagation},'' \emph{IEEE Transactions on Vehicular
  Technology}, vol.~21, no.~1, pp. 1--9, 1972.

\bibitem{pedersen2000stochastic}
K.~Pedersen, P.~Mogensen, and B.~Fleury, ``{A stochastic model of the temporal
  and azimuthal dispersion seen at the base station in outdoor propagation
  environments},'' \emph{IEEE Transactions on Vehicular Technology}, vol.~49,
  no.~2, pp. 437--447, 2000.

\bibitem{jordan1998learning}
M.~Jordan, \emph{{Learning in graphical models}}.\hskip 1em plus 0.5em minus
  0.4em\relax Kluwer Academic Publishers, 1998.

\bibitem{doucet2001sequential}
A.~Doucet, N.~De~Freitas, and N.~Gordon, \emph{{Sequential Monte Carlo methods
  in practice}}.\hskip 1em plus 0.5em minus 0.4em\relax Springer Verlag, 2001.

\bibitem{rezaei2007kalman}
S.~Rezaei and R.~Sengupta, ``Kalman filter-based integration of dgps and
  vehicle sensors for localization,'' \emph{Control Systems Technology, IEEE
  Transactions on}, vol.~15, no.~6, pp. 1080--1088, 2007.

\bibitem{weiss2008maximum}
A.~Weiss and J.~Picard, ``{Maximum-likelihood position estimation of network
  nodes using range measurements},'' \emph{Signal Processing, IET}, vol.~2,
  no.~4, pp. 394--404, 2008.

\end{thebibliography}
 \begin{appendices}

  \section{Derivation of the CRLB for Anchors}
Recall that ${\bf u}$ is the vector of agent/vehicle locations. We will treat the anchors separately and will denote the vector of anchor locations by ${\bf v}$.  Let $ \bf{x} = \left[ \begin{array}{c}
\bf{u} \\
\bf{v} \\
 \end{array} \right] \in {\mathbb C}^{N+M}$ be the vector of all the node locations. The location difference, $x_i-x_j$, between any two nodes can be described using the $N \times M$ vector ${\bf e}_{ij}$, whose $i$th entry is 1 and $j$th entry is -1 and all other entries are set to zero. Thus we get, 
 \bea {\bf e}_{ij}^{T}{\bf x} = x_i - x_j.\eea 
Let $L$ be the total number of distance measurements obtained in the network. We will assume that distance measurements are obtained between nodes that are within a communication radius $R$ of each other. Collecting all the location differences we get the following relation,
\bea E\bf{x} = \bf{y},\eea  where the rows of $E \in {\mathbb R^{L \times (N+M)}}$, are the vectors ${\bf e}_{ij}^{T}$ and $\bf{y} \in  {\mathbb C^{L \times 1}} $, is a complex vector of all the available location differences of nodes that are within a radius $R$ of each other. The absolute value of each entry in the vector ${\bf y}$ denotes the distance between the corresponding two nodes obtained from the matrix $E$.\\

 Let $E = [E_1 E_2]$, where $E_1 \in {\mathbb R}^{L \times N}$ and $E_2 \in {\mathbb R}^{L \times M}$. We can then write,
\bea E_1 {\bf u} + E_2 {\bf v} = {\bf y}.\eea   Let ${\bf d}$ denote the vector of all distances between nodes having observations i.e. ${\bf d = |y|}$, where ${\bf |y|}$ is a notation used to denote a vector whose components are the absolute values of the individual components of ${\bf y}$. Let ${\bf \hat{d}}$ denote the vector of  pairwise distance measurements  i.e.
\bea \hat{d}_j = d_j + n_j,\eea  where $n_j \sim p_{NOISE}(.)$.
 Define the two real diagonal matrices,
\bea 
D_R &\triangleq& Re\{\mbox{diag}\{y_1/|y_1|.......y_L/|y_L|\}\}\\
D_I &\triangleq& Im\{\mbox{diag}\{y_1/|y_1|.......y_L/|y_L|\}\}.
\eea 
 The authors in \cite{weiss2008maximum} obtain a compact representation for the Fisher Information Matrix for LOS Gaussian noise as shown below,
\bea  F = \frac{1}{\sigma^2}\left[ \begin{array}{c c}
E_{1}^TD_{R}^2E_1 &  E_{1}^TD_{R}D_{I}E_1 \\
E_{1}^TD_{R}D_{I}E_1 &  E_{1}^TD_{I}^2E_1 \\
 \end{array} \right],\eea 
 $F$ is assumed to be invertible. In our generalized case we get,
 \bea  F = \frac{1}{g(p_{NOISE})}\left[ \begin{array}{c c}
E_{1}^TD_{R}^2E_1 &  E_{1}^TD_{R}D_{I}E_1 \\
E_{1}^TD_{R}D_{I}E_1 &  E_{1}^TD_{I}^2E_1 \\
 \end{array} \right],\eea 
 
 Since we are interested in analyzing the behavior as a function of the number of anchors, we will ignore the scalar  $\frac{1}{g(p_{NOISE})}$ and work only with the matrix which we denote as $F_G$. Let us suppose that we add a single anchor to the set of existing nodes. Let $l$ be the number of additional distance measurements that are obtained. Without loss of generality assume that the first $l$ nodes get measurements with respect to the new anchor. The equation relating the agent locations to the location differences, $E{\bf x} = {\bf y}$ now gets updated to,
 \bea \left[ \begin{array}{c c c}
E_{1} &  E_{2}  &   0\\
\mbox{I}_l | \bf{0} &  \bf{0} &  -\bf{1}  \\
 \end{array} \right]   \left[ \begin{array}{c}
\bf{x} \\
x_{N+M+1} \\
 \end{array} \right]  =   \left[ \begin{array}{c}
\bf{y} \\
y_{L+1} \\
..\\
y_{L+l}
 \end{array} \right] . \eea 
 where $\mbox{I}_l$ is the identity matrix of dimension $l$,  ${\bf 0}$ and ${\bf 1}$ are vectors/matrices consisting of all zeros and all ones respectively, of appropriate dimensions. Let $\Delta_1 \triangleq [{\bf I}_l | {\bf 0}]$. Define
 \bea 
D_{R_1} &\triangleq& Re\{\mbox{diag}\{y_{L+1}/|y_{L+1}|.......y_{L+l}/|y_{L+l}|\}\}\\
D_{I_1} &\triangleq& Im\{\mbox{diag}\{y_{L+1}/|y_{L+1}|.......y_{L+l}/|y_{L+l}|\}\}
\eea 

Thus, after adding a single anchor, the matrix $E_1$ gets updated to $\left[\begin{array}{c} E_1\\ \mbox{I}_l | \bf{0} \end{array}\right]$. Hence the entries of the Fisher Matrix get updated as follows. We have $E_1^T D_R^2 E_1$ getting updated to $\left[\begin{array}{c} E_1\\ \mbox{I}_l | \bf{0} \end{array}\right]^T \left[\begin{array}{c c} D_R^2 & 0\\ 0 & D_{R_1}^2 \end{array}\right] \left[\begin{array}{c} E_1\\ \mbox{I}_l | \bf{0} \end{array}\right]$. Defining $\Delta_1 =  [\mbox{I}_l | \bf{0}]$, the new Fisher matrix can be written as,
\bea \tilde{F}_G = \left[ F_G +  \left[ \begin{array}{c c}
\Delta_{1}^T{D}_{R_1}^2\Delta_1 &  \Delta_{1}^T{D}_{R_1}{D}_{I_1}\Delta_1 \\
\Delta_{1}^T{D}_{R_1}{D}_{I_1}\Delta_1&  \Delta_{1}^T{D}_{I_1}^2\Delta_1 \\
 \end{array} \right]\right].\eea 

Note that in this case we defined a particular ordering of the nodes, i.e. the first $l$ nodes to have measurements with the newly added anchor. For a general ordering, it is easy to see that the identity matrix in the definition of $\Delta_1$ would be replaced by a general permutation matrix. Thus, if we add $M'$ anchors recursively in the network, the new Fisher matrix can be expressed as,
\bea \tilde{F}_G =  F_G +\sum_{i=1}^{M'}  \left[ \begin{array}{c c}
\Delta_{i}^T{D}_{R_i}^2\Delta_i &  \Delta_{i}^T{D}_{R_i}{D}_{I_i}\Delta_i \\
\Delta_{i}^T{D}_{R_i}{D}_{I_i}\Delta_i &  \Delta_{i}^T{D}_{I_i}^2\Delta_i \\
 \end{array} \right].\eea 
 Here the matrices $\Delta_{i}$ of size $N \times l_i$ where $l_i$ nodes get measurements with the $i$th newly added anchor, have ones corresponding to the columns of the nodes with which the $i$th newly introduced anchor gets measurements. ${D}_{R_i}, {D}_{I_i}$ have definition similar to ${D_{R_1}}$ and ${D_{I_1}}$. For simplicity lets first consider the case where the newly introduced anchors have measurements with all the agents. We then have $\Delta_{i} = I_N \ \ \forall i$ giving us,
 \bea \tilde{F}_G =  F_G +\sum_{i=1}^{M'}  \left[ \begin{array}{c c}
{D}_{R_i}^2 &  {D}_{R_i}{D}_{I_i} \\
{D}_{R_i}\tilde{D}_{I_i} &  {D}_{I_i}^2 \\
 \end{array} \right].\eea 
By definition, $D_{R_i}(j) = \frac{Re(y_k)}{|y_k|}$ where $k = L + (i-1)N+j$. This can also be equivalently written as $D_{R_i}(j) = \mbox{cos}(\phi_{ij})$ where $\phi_{ij}$ is the angle made by the line joining $i$th newly added anchor node and the $j$th agent, with the horizontal axis. Let us assume that each anchor that is newly introduced is randomly placed in the field independent of all other nodes. Thus for each node $j$, $\{\phi_{ij}\}_{i=1}^{M'}$'s are i.i.d and distributed U$(0,2\pi)$.  Similarly   $D_{I_i}(j) = \mbox{sin}(\phi_{ij})$ and by the strong law of large numbers we get,
 \bea 
 \sum_i D_{R_i}^2(j) & = & \sum_{i = 1}^{M'} \cos^2(\phi_{ij})  \rightarrow  \frac{M'}{2},\\
  \sum_i D_{I_i}^2(j) & = & \sum_{i = 1}^{M'} \sin^2(\phi_{ij})  \rightarrow  \frac{M'}{2},\\
    \sum_i D_{I_i}(j) D_{R_i}(j) & = & \sum_{i = 1}^{M'} \sin(\phi_{ij})\cos(\phi_{ij})  \rightarrow  0.\\
 \eea 
 
 The Fisher matrix now simplifies to,
 \bea \tilde{F}_G = F_G + \frac{M'}{2} I _{2N} \ \ w.h.p.\eea 
  We  know that the Fisher Information Matrix is a covariance matrix and hence is symmetric positive definite. We can write $F_G = U \Lambda U^H$, where $\Lambda$ is a diagonal matrix of the eigen values of $F_G$ and $UU^H = U^HU = I$. We also know that $Tr(ABC) = Tr(CAB) = Tr(BCA)$.
  This gives us, 
  \bea 
  \mbox{Trace}(\tilde{F_G}^{-1}) &=& \mbox{Trace}(U \Lambda U^H +  \frac{M'}{2} I _{2N})^{-1}, \\
                                                    &=& \mbox{Trace}(U (\Lambda +  \frac{M'}{2} I _{2N})U^H)^{-1},\\
                                                    & = & \sum_{i=1}^{2N}  \frac{1}{\lambda_i + \frac{M'}{2}}.
  \eea 
  It is now easy to extend the analysis to the case where each anchor node has measurements only with the nodes that are within a radius $R$. Let $A$ be the total area of the field where the nodes are placed. Let  $\rho = \frac{\pi R^2}{A}$ and then $\rho M'$ would be the average number of neighbors of each node. In this case we would have $\Delta_{i}^T{D}_{R_i}^2\Delta_i \rightarrow \frac{\rho M'}{2} I_N$ and so on. One can easily show that,
  \bea  \mbox{Trace}(\tilde{F_G}^{-1}) = \sum_{i=1}^{2N}  \frac{1}{\lambda_i + \frac{\rho M'}{2}}.\eea 
  
  \section{Derivation of the CRLB for agents}
  The setup is similar as in the previous case with $N$ agents, $M$ anchors and $L$ distance measurements. We are interesting in characterizing the behavior of the CRLB after adding $N'$ agents to the existing network of anchors and agents. Consider the simple case when $N' = 1$. We will assume for now that the newly introduced agent has measurements with all of the existing $N$ agents and $M$ anchors. Let $\tilde{F_G}$, $\tilde{E}$, $\tilde{E_1}$, $\tilde{D_R}$, $\tilde{D_I}$, be the new set of matrices obtained after adding this node. We then have the following,
\bea  \tilde{E} = 
\left[
\begin{array}{ccc}
  E_1 & {\bf 0}_L  & E_2   \\
  -I_N& {\bf 1}_N  & 0   \\
  {\bf 0}_N & {\bf 1}_M & -I_M\\
\end{array}
\right]
,\eea 
\bea \tilde{E_1} = 
\left[
\begin{array}{cc}
 E_1 & {\bf 0}_L \\
  -I_N & {\bf 1}_N  \\
   {\bf 0}_N & {\bf 1}_M\\
\end{array}
\right].
\eea 
Let 
\bea 
D_{R_{11}} &\triangleq& Re\{\mbox{diag}\{y_{L+1}/|y_{L+1}|.......y_{L+N}/|y_{L+N}|\}\}\\
D_{R_{12}} &\triangleq& Re\{\mbox{diag}\{y_{L+N+1}/|y_{L+N+1}|.......y_{L+N+M}/|y_{L+N+M}|\}\}\\
D_{I_{11}} &\triangleq& Im\{\mbox{diag}\{y_{L+1}/|y_{L+1}|.......y_{L+N}/|y_{L+N}|\}\}\\
D_{I_{12}} &\triangleq& Im\{\mbox{diag}\{y_{L+N+1}/|y_{L+N+1}|.......y_{L+N+M}/|y_{L+N+M}|\}\}
\eea 
Then\bea  \tilde{D}_R = 
\left[
\begin{array}{ccc}
 D_R & {\bf 0}  & {\bf 0}\\
 {\bf 0} & D_{R_{11}} & {\bf 0} \\
 {\bf 0} &  {\bf 0} & D_{R_{12}}\\
\end{array}
\right]
,\eea 
\bea  \tilde{D}_I = 
\left[
\begin{array}{ccc}
 D_I & {\bf 0}  & {\bf 0}\\
 {\bf 0} & D_{I_{11}} & {\bf 0} \\
 {\bf 0} &  {\bf 0} & D_{I_{12}}\\
\end{array}
\right].
\eea 

The new Fisher Information Matrix is given by,
\bea  \tilde{F_G} = \left[ \begin{array}{c c}
\tilde{E}_{1}^T\tilde{D}_{R}^2\tilde{E}_1 &  \tilde{E}_{1}^T\tilde{D}_{R}\tilde{D}_{I}\tilde{E}_1 \\
\tilde{E}_{1}^T\tilde{D}_{R}\tilde{D}_{I}\tilde{E}_1 &  \tilde{E}_{1}^T\tilde{D}_{I}^2\tilde{E}_1 \\
 \end{array} \right].\eea 

The individual terms of $\tilde{F_G}$ can be simplified as shown in (\ref{eqn:A}) - (\ref{eqn:B}) (lengthy equations are in the last page). \\

Lets now consider adding one more agent to the existing set of agents i.e. $N' = 2$. The second agent gets measurements from the first $N$ agents and $M$ anchors. In this case we get the following updated matrices,
\bea \tilde{E_1} &=& 
\left[
\begin{array}{ccc}
 E_1 & {\bf 0}_L &   {\bf 0}_L \\
  -I_N & {\bf 1}_N &  {\bf 0}_N   \\
  {\bf 0}_N & {\bf 1}_M & {\bf 0}_M\\
   -I_N & {\bf 0}_N &  {\bf 1}_N\\
   {\bf 0}_N & {\bf 0}_M & {\bf 1}_M  
\end{array}
\right], \\
 \tilde{D}_R & = &  
\left[
\begin{array}{ccccc}
 D_R& {\bf 0}  & {\bf 0} &  {\bf 0} &  {\bf 0}\\
 {\bf 0} & D_{R_{11}} & {\bf 0} &  {\bf 0} &  {\bf 0} \\
  {\bf 0} & {\bf 0} & D_{R_{12}} & {\bf 0} &  {\bf 0} \\
    {\bf 0} & {\bf 0} & {\bf 0} &  D_{R_{21}} &   {\bf 0}\\
     {\bf 0} & {\bf 0} & {\bf 0} &    {\bf 0} & D_{R_{22}} 
 \end{array}
\right].
\eea 
The terms in the new Fisher Information Matrix can be simplified and have the structure shown in (\ref{eqn:B}). The Fisher matrix evolution as more and more nodes are added is apparent from the expression (\ref{eqn:B}). Similar evolution holds for other block terms in the Fisher matrix. To simplify the analysis it would be good if we could separate out the original Fisher Information Matrix terms and express $\tilde{F_G}$ in terms of $F_G$. This requires rearranging some of the terms in $\tilde{F_G}$. Recall the definition of $F_G$. We had ${\bf \eta = [u_{R}^T u_{I}^T]^T}$, where ${\bf u_R} = Re\{{\bf u}\}, {\bf u_I} = Im\{{\bf u}\}$. Then $F_G$ is given by,
\bea F_{G_{ij}} \triangleq E\left\{  \frac{\partial f({\bf \hat{d} | \eta})}{\partial \eta_i}   \frac{\partial f({\bf \hat{d} | \eta})}{\partial \eta_j} \right\}.\eea 
Let ${\bf z = z_R + j z_I} \in {\mathbb C^{N' \times 1}}$, denote the location of the newly added nodes. The Fisher Information Matrix, $\tilde{F_G}$ that we have calculated corresponds to the following ordering of the parameters, 
\bea {\bf \tilde{\eta} = \left[
\left[
\begin{array}{c}
 u_R\\
 z_R
\end{array}
\right]^T 
\left[
\begin{array}{c}
 u_I\\
 z_I
\end{array}
\right]^T \right]^T}.\eea 

We will now rearrange the parameters so as to get, 
\bea {\bf {\tilde\tilde\eta}} = \left[\begin{array}{c} u_{R} \\ 
u_{I}\\
 z_{R}\\ 
 z_{I}
 \end{array}\right].\eea 
Retaining the same notation for $\tilde{F_G}$, we get the following simplification, 
\bea \tilde{F_G} = 
\left[
\begin{array}{cc}
  F_G + \Delta_{11}& \Delta_{12}   \\
  \Delta_{21}& \Delta_{22}   \\      
\end{array}
\right],
\eea 
where, 
\bea 
\Delta_{11} =  
\left[
\begin{array}{cc}
  \sum_{j=1}^{N'} D_{R_{j1}}^2 &  \sum_{j=1}^{N'} D_{R_{j1}} D_{I_{j1}}     \\
 \sum_{j=1}^{N'} D_{R_{j1}} D_{I_{j1}}  &    \sum_{j=1}^{N'} D_{I_{j1}}^2 \\   
\end{array}
\right].
\eea 
$\Delta_{22}$ and $\Delta_{12}$ are given by the expressions (\ref{eqn:C}) and (\ref{eqn:D}) respectively.\\

We are now interested in the error improvement of the first $N$ agents after the addition of $N'$ agents. For this, it is sufficient to look at the Schur Complement of the matrix $\Delta_{22}$, since the inverse of the Schur Complement corresponds to the CRLB restricted to the first $N$ nodes. The Schur Complement is given by,
 \bea F + \Delta_{11}-\Delta_{12}\Delta_{22}^{-1}\Delta_{21}.\eea  
  Based on similar arguments as in the case of anchor nodes, assuming that each of the newly added nodes are distributed uniform i.i.d into the network, we have for large $N'$,
\bea \Delta_{11} \rightarrow \frac{N'}{2} I_{2N}.\eea 
Let us now assume that the initial set of nodes were also placed uniform randomly in the field. Consider the terms in $\Delta_{22}$. Each of the terms $ {\bf 1}_{N}^T D_{R_{j1}}^2 {\bf 1}_N$ are the sum of the cosine of the angles made by the newly introduced $j$th node with all the existing nodes in the network. Under the random placement assumption, these angles can also be taken to be distributed i.i.d $U(0,2\pi)$. Thus we have for large $N$,
\bea \Delta_{22} \rightarrow \frac{N+M}{2}I_{2N'}.\eea 
Note the difference in this approach as compared to that for the anchor nodes. Here we are averaging over all initial node placements also for this approximation to hold. For the anchor nodes, the result was true for any initial node placement.\\

With the above approximation in place, we have
\bea \Delta_{12}\Delta_{22}^{-1}\Delta_{12}^T = \frac{2}{N+M}\Delta_{12}\Delta_{12}^T.\eea  
This can be expanded to obtain the expression (\ref{eqn:E}). With the usual law of large numbers argument, each of the terms in the matrix converge as $N'$ grows large. The corresponding values to which the  $kl$th term in the matrix converges are shown in (\ref{eqn:F})-(\ref{eqn:G}).
Thus we get,
\bea \Delta_{12}\Delta_{12}^T \rightarrow \frac{N'}{4}
\left(
\left[
\begin{array}{cc}
  {\bf 1}_N{\bf 1}_{N}^T&  {\bf 0}  \\
 {\bf 0} &    {\bf 1}_N{\bf 1}_{N}^T \\
\end{array}
\right] + I_{2N}\right).
\eea 
The Schur complement can now be simplified as shown in (\ref{eqn:H}).
The CRLB restricted to the first $N$ nodes is given by the inverse of this Schur Complement. Now consider the case when measurements are obtained only between nodes that are within a radius $R$ of each other. Let $\rho = \frac{\pi R^2}{A}$, then similar arguments would simplify the new Fisher matrix $F'$ to be,
 \bea F + \frac{\rho N'}{2}\left( \left(1-\frac{1}{\rho(N+M)}\right)I_{2N} - \frac{1}{\rho(N+M)}\left[
\begin{array}{cc}
  {\bf 1}_N{\bf 1}_{N}^T&  {\bf 0}  \\
 {\bf 0} &    {\bf 1}_N{\bf 1}_{N}^T \\
\end{array}
\right]\right).\eea 

\section{Mobile setting}
Let ${\bf u}_{R_t}+ j{\bf u}_{I_t} $ be the vector of vehicle locations at time $t$. Define the parameter vector as ${\bf \eta} = \left[
\begin{array}{rl} {\bf u}_{R_1};{\bf u}_{I_1} ; ....; {\bf u}_{R_T} ; {\bf u}_{I_T} \end{array}\right]$. Let  ${\bf s}$ be the vector of velocity measurements. Since the measurement noise is independent across the velocity measurements and the ranging measurements, it is easy to see that the $(k,m)$th entry in the Fisher Information Matrix is given by,
\bea
F_{km} & = & \mathbb{E}\left\{  \frac{\partial \ln p({\Theta | \eta})}{\partial \eta_k}   \frac{\partial \ln p({\Theta | \eta})}{\partial \eta_m} \right\} + \mathbb{E}\left\{  \frac{\partial \ln p({\ {\bf s} | \eta})}{\partial \eta_k}   \frac{\partial \ln p({\ {\bf s}| \eta})}{\partial \eta_m} \right\},\\
\eea
Define the matrix $F_G$ such that,
\bea
F_{G_{km}} & = &  \mathbb{E}\left\{  \frac{\partial \ln p({\Theta | \eta})}{\partial \eta_k}   \frac{\partial \ln p({\Theta | \eta})}{\partial \eta_m} \right\}, 
\eea 
and matrix $F_{INS}$ such that,
\bea
F_{INS_{km}} & = & \mathbb{E}\left\{  \frac{\partial \ln p({\ {\bf s} | \eta})}{\partial \eta_k}   \frac{\partial \ln p({\ {\bf s}| \eta})}{\partial \eta_m} \right\}.
\eea
$F_G$ is the Fisher Information Matrix corresponding only to the ranging measurements and $F_{INS}$ is the Fisher Information Matrix characterizing the contribution of the inertial navigation system measurements to positioning. Consider the matrix $F_G$. Since the measurement noise is independent across vehicles as well as across time, one can verify that, for $(\eta_k = u_{R_t}, \eta_m = u_{R_{t'}})$, we have that $F_{G_{km}} = 0$ whenever $t \neq t'$. Similarly this holds true for other pairs of parameters $(\eta_k,\eta_m)$ that corresponds to different time instants. Thus, we would end up with a block diagonal structure of the following form,
 \bea F_G =    \left[ \begin{array}{cccc} 
       F_G(1) & 0 &...& 0\\
       0 & F_{G}(2) & ... & 0 \\
       0 & 0 & ... & 0\\
       0 & 0 & ...& F_{G}(T)\\
    \end{array}\right] ,\eea
where $F_G(t)$ is the Fisher Information Matrix as derived in Appendix A, with the vehicle locations at time $t$. 
$F_{INS}$ can be evaluated as follows.
    Due to the independence of noise, the only non-zero terms in the matrix would be the diagonal and the first off-diagonal entries. One can verify that for $\eta_k = u_{R_t}(k)$, $\eta_m = u_{R_t}(k)$, $F_{INS_{km}} = \frac{2}{\sigma_{INS}^2}$, for $\eta_k = u_{R_t}(k)$, $\eta_m = u_{R_{t-1}}(k)$,  $F_{INS_{km}} = \frac{-1}{\sigma_{INS}^2}$ and for $\eta_k = u_{R_t}(k)$, $\eta_m = u_{R_{t+1}}(k)$,  $F_{INS_{km}} = \frac{-1}{\sigma_{INS}^2}$. Similarly this holds for the imaginary components.

\newpage
{\scriptsize
\bea
\tilde{E}_{1}^T\tilde{D}_{R}^2\tilde{E}_1  & = &\left[
\begin{array}{ccc}
 E_1^T & -I_N &  {\bf 0}_N^T \\
{\bf 0}_L^T &  {\bf 1}_N^T &  {\bf 1}_M^T\\
\end{array}
\right]
\label{eqn:A}
\left[
\begin{array}{ccc}
 D_R & {\bf 0}  & {\bf 0}\\
 {\bf 0} & D_{R_{11}} & {\bf 0} \\
 {\bf 0} &  {\bf 0} & D_{R_{12}}\\
\end{array}
\right]
\left[
\begin{array}{cc}
 E_1 & {\bf 0}_L \\
  -I_N & {\bf 1}_N  \\
   {\bf 0}_N & {\bf 1}_M\\
\end{array}
\right] \\
& = & 
\left[
\begin{array}{cc}
E_{1}^TD_{R}^2 E_1 + D_{R_{11}}^2  &   -D_{R_{11}}^2 {\bf 1}_N   \\
 -{\bf 1}_{N}^T D_{R_{11}}^2 &   {\bf 1}_{N}^T D_{R_{11}}^2 {\bf 1}_N+{\bf 1}_{M}^T D_{R_{12}}^2 {\bf 1}_M
\end{array}
\right]\\
\tilde{E}_{1}^T\tilde{D}_{I}^2\tilde{E}_1 & = & 
\left[
\begin{array}{cc}
E_{1}^TD_{I}^2 E_1 + D_{I_{11}}^2  &   -D_{I_{11}}^2 {\bf 1}_N   \\
 -{\bf 1}_{N}^T D_{I_{11}}^2 &   {\bf 1}_{N}^T D_{I_{11}}^2 {\bf 1}_N +  {\bf 1}_{M}^T D_{I_{12}}^2 {\bf 1}_M
\end{array}
\right]
\\
\tilde{E}_{1}^T\tilde{D}_R\tilde{D}_{I}\tilde{E}_1 & = &
\left[
\begin{array}{cc}
E_{1}^TD_{R}D_{I} E_1 + D_{R_{11}} D_{I_{11}}&   -D_{R_{11}}D_{I_{11}} {\bf 1}_N   \\
 -{\bf 1}_{N}^T D_{R_{11}}D_{I_{11}} &   {\bf 1}_{N}^T D_{R_{11}}D_{I_{11}} {\bf 1}_N +  {\bf 1}_{M}^T D_{R_{11}}D_{I_{11}} {\bf 1}_M
\end{array}
\right]
\eea
}

{\scriptsize
\bea
\tilde{E}_{1}^T\tilde{D}_{R}^2\tilde{E}_1  & = & \left[
\begin{array}{ccccc}
 E_1^T &  -I_N &  {\bf 0}_{N}^T  & -I_N & {\bf 0}_{N}^T\\
  {\bf 0}_{L}^T & {\bf 1}_{N}^T  & {\bf 1}_{M}^T &  {\bf 0}_{N}^T   &  {\bf 0}_{M}^T\\
  {\bf 0}_{L}^T  &  {\bf 0}_{N}^T  &  {\bf 0}_{M}^T & {\bf 1}_{N}^T & {\bf 1}_{M}^T 
\end{array}
\right]
\left[
\begin{array}{ccccc}
 D_R& {\bf 0}  & {\bf 0} &  {\bf 0} &  {\bf 0}\\
 {\bf 0} & D_{R_{11}} & {\bf 0} &  {\bf 0} &  {\bf 0} \\
  {\bf 0} & {\bf 0} & D_{R_{12}} & {\bf 0} &  {\bf 0} \\
    {\bf 0} & {\bf 0} & {\bf 0} &  D_{R_{21}} &   {\bf 0}\\
     {\bf 0} & {\bf 0} & {\bf 0} &    {\bf 0} & D_{R_{22}} 
 \end{array}
\right]
\left[
\begin{array}{ccc}
 E_1 & {\bf 0}_L &   {\bf 0}_L \\
  -I_N & {\bf 1}_N &  {\bf 0}_N   \\
  {\bf 0}_N & {\bf 1}_M & {\bf 0}_M\\
   -I_N & {\bf 0}_N &  {\bf 1}_N\\
   {\bf 0}_N & {\bf 0}_M & {\bf 1}_M  
\end{array}
\right] \\
& = & 
\left[
\begin{array}{ccc}
E_{1}^TD_{R}^2 E_1 + D_{R_{11}}^2 + D_{R_{21}}^2  &   -D_{R_{11}}^2 {\bf 1}_N   & -D_{R_{21}}^2 {\bf 1}_N \\
 -{\bf 1}_{N}^T D_{R_{11}}^2 &   {\bf 1}_{N}^T D_{R_{11}}^2 {\bf 1}_N + {\bf 1}_{M}^T D_{R_{12}}^2 {\bf 1}_M &  {\bf 0}\\
  -{\bf 1}_{N}^T D_{R_{21}}^2 &  {\bf 0} &   {\bf 1}_{N}^T D_{R_{21}}^2 {\bf 1}_N +  {\bf 1}_{M}^T D_{R_{22}}^2 {\bf 1}_M \\
\end{array}
\right]
\label{eqn:B}
\eea}

{\scriptsize
\bea
\Delta_{22} & = &   
\left[
\begin{array}{ccc|ccc}
  {\bf 1}_{N}^T D_{R_{11}}^2 {\bf 1}_N &  {\bf 0} &    {\bf 0}        &   {\bf 1}_{N}^T D_{R_{11}} D_{I_{11}} {\bf 1}_N   & {\bf 0}  &  {\bf 0}  \\
   {\bf 0} & {\bf 1}_{N}^T D_{R_{21}}^2 {\bf 1}_N & {\bf 0} &   {\bf 0} &{\bf 1}_{N}^T D_{R_{21}} D_{I_{21}} {\bf 1}_N  &  {\bf 0}   \\
              &  ....... & & &....&  \\
    {\bf 0}&{\bf 0}   & {\bf 1}_{N}^T D_{R_{N'1}}^2 {\bf 1}_N  &   {\bf 0} &  {\bf 0}  &{\bf 1}_{N}^T D_{R_{N'1}} D_{I_{N'1}} {\bf 1}_N    \\
    \hline
    
  {\bf 1}_{N}^T D_{R_{11}}D_{I_{11}} {\bf 1}_N &  {\bf 0} & {\bf 0}           &   {\bf 1}_{N}^T  D_{I_{11}}^2 {\bf 1}_N   & {\bf 0}  &  {\bf 0}  \\
   {\bf 0} & {\bf 1}_{N}^T D_{R_{21}}D_{I_{21}} {\bf 1}_N & {\bf 0} &   {\bf 0} &{\bf 1}_{N}^T D_{I_{21}}^2 {\bf 1}_N  &  {\bf 0}   \\
              &  ....... & &  &....& \\
    {\bf 0}& {\bf 0}  & {\bf 1}_{N}^T D_{R_{N'1}}D_{I_{N'1}} {\bf 1}_N  &   {\bf 0} &  {\bf 0}  &{\bf 1}_{N}^T D_{I_{N'1}}^2 {\bf 1}_N    \\
\end{array}
\right] \\
& + &    
\left[
\begin{array}{ccc|ccc}
  {\bf 1}_{M}^T D_{R_{12}}^2 {\bf 1}_M &  {\bf 0} &    {\bf 0}        &   {\bf 1}_{M}^T D_{R_{12}} D_{I_{12}} {\bf 1}_M   & {\bf 0}  &  {\bf 0}  \\
   {\bf 0} & {\bf 1}_{M}^T D_{R_{22}}^2 {\bf 1}_M & {\bf 0} &   {\bf 0} &{\bf 1}_{M}^T D_{R_{22}} D_{I_{22}} {\bf 1}_M  &  {\bf 0}   \\
              &  ....... & & &....&  \\
    {\bf 0}&{\bf 0}   & {\bf 1}_{M}^T D_{R_{N'2}}^2 {\bf 1}_M  &   {\bf 0} &  {\bf 0}  &{\bf 1}_{M}^T D_{R_{N'2}} D_{I_{N'2}} {\bf 1}_M    \\
    \hline
    
  {\bf 1}_{M}^T D_{R_{12}}D_{I_{12}} {\bf 1}_M &  {\bf 0} & {\bf 0}           &   {\bf 1}_{M}^T  D_{I_{12}}^2 {\bf 1}_M   & {\bf 0}  &  {\bf 0}  \\
   {\bf 0} & {\bf 1}_{M}^T D_{R_{22}}D_{I_{22}} {\bf 1}_M & {\bf 0} &   {\bf 0} &{\bf 1}_{M}^T D_{I_{22}}^2 {\bf 1}_M  &  {\bf 0}   \\
              &  ....... & &  &....& \\
    {\bf 0}& {\bf 0}  & {\bf 1}_{M}^T D_{R_{N'2}}D_{I_{N'2}} {\bf 1}_M  &   {\bf 0} &  {\bf 0}  &{\bf 1}_{M}^T D_{I_{N'2}}^2 {\bf 1}_M    \\
\end{array}
\right]
\label{eqn:C}
\eea
}

{\scriptsize
\bea
\Delta_{12} & = & 
\left[
\begin{array}{cccc|cccc}
    D_{R_{11}}^2 {\bf 1}_N&  D_{R_{21}}^2 {\bf 1}_N&... &D_{R_{N'1}}^2 {\bf 1}_N&D_{R_{11}}D_{I_{11}} {\bf 1}_N&  D_{R_{21}}D_{I_{21}} {\bf 1}_N&... &D_{R_{N'1}}D_{I_{N'1}} {\bf 1}_N  \\
 D_{R_{11}}D_{I_{11}} {\bf 1}_N&  D_{R_{21}}D_{I_{21}} {\bf 1}_N&... &D_{R_{N'1}}D_{I_{N'1}} {\bf 1}_N&  D_{I_{11}}^2 {\bf 1}_N&  D_{I_{21}}^2 {\bf 1}_N&... &D_{I_{N'1}}^2 {\bf 1}_N       \\        
\end{array}
\right]
\label{eqn:D}
\eea
}
{\scriptsize
\bea
\Delta_{12}\Delta_{12}^T & = &  
\left[
\begin{array}{c|c}
  \sum_{j=1}^{N'} D_{R_{j1}}^2 {\bf 1}_N {\bf 1}_{N}^T D_{R_{j1}}^2 +  D_{R_{j1}}D_{I_{j1}} {\bf 1}_N {\bf 1}_{N}^T D_{R_{j1}} D_{I_{j1}}&   \sum_{j=1}^{N'} D_{R_{j1}}^2 {\bf 1}_N {\bf 1}_{N}^T D_{R_{j1}}D_{I_{j1}} +  D_{R_{j1}}D_{I_{j1}} {\bf 1}_N {\bf 1}_{N}^T D_{I_{j1}}^2\\
 \sum_{j=1}^{N'} D_{R_{j1}}^2 {\bf 1}_N {\bf 1}_{N}^T D_{R_{j1}}D_{I_{j1}} +  D_{R_{j1}}D_{I_{j1}} {\bf 1}_N {\bf 1}_{N}^T D_{I_{j1}}^2 &   \sum_{j=1}^{N'} D_{I_{j1}}^2 {\bf 1}_N {\bf 1}_{N}^T D_{I_{j1}}^2 +  D_{R_{j1}}D_{I_{j1}} {\bf 1}_N {\bf 1}_{N}^T D_{R_{j1}} D_{I_{j1}}    \\
\end{array}
\right]
\label{eqn:E}
\eea
}

{\scriptsize
 \bea
 \left(\sum_{j=1}^{N'} D_{R_{j1}}^2 {\bf 1}_N {\bf 1}_{N}^T D_{R_{j1}}^2 \right)(kl) & = &  \sum_{j=1}^{N'} \cos(\phi_j(k))^2 \cos(\phi_j(l))^2 \\
  \label{eqn:F}
                            & \rightarrow & 
\left\{
\begin{array}{cc}
 \frac{3N'}{8} &   \mbox{if  } k=l \\
 \frac{N'}{4} & \mbox{o.w}    \\    
\end{array}
\right.\\
 \left(\sum_{j=1}^{N'} D_{R_{j1}}^2 {\bf 1}_N {\bf 1}_{N}^T D_{R_{j1}}D_{I_{j1}} + D_{R_{j1}}D_{I_j} {\bf 1}_N {\bf 1}_{N}^T D_{I_{j1}}^2 \right)(kl) & = & \sum_{j=1}^{N'} ( \cos(\phi_j(k))^2 \cos(\phi_j(l))\sin(\phi_j(l))+\cos(\phi_j(k)) \sin(\phi_j(k))\sin(\phi_j(l))^2)\\
   &\rightarrow& 0\\
  \left(\sum_{j=1}^{N'} D_{R_{j1}}D_{I_{j1}} {\bf 1}_N {\bf 1}_{N}^T D_{R_{j1}} D_{I_{j1}}\right)(kl)    & = &  \frac{1}{4}\sum_{j=1}^{N'} \sin\phi_j(l)) \sin\phi_j(k))  \\
                            & \rightarrow &  
\left\{
\begin{array}{cc}
 \frac{N'}{8} & \mbox{if  } j =k     \\
  0 &   \mbox{o.w} \\   
\end{array}
\right. \\
   \left(\sum_{j=1}^{N'} D_{R_{j1}}D_{I_{j1}} {\bf 1}_N {\bf 1}_{N}^T D_{I_{j1}}^2 \right)(kl) 
   &\rightarrow& 0
   \label{eqn:G}
  \eea}

{\scriptsize
\bea
F + \Delta_{11}-\Delta_{12}\Delta_{22}^{-1}\Delta_{21} \rightarrow F + \frac{N'}{2}\left( \left(1-\frac{1}{N+M}\right)I_{2N} - \frac{1}{N+M}\left[
\begin{array}{cc}
  {\bf 1}_N{\bf 1}_{N}^T&  {\bf 0}  \\
 {\bf 0} &    {\bf 1}_N{\bf 1}_{N}^T \\
\end{array}
\right]\right)
\label{eqn:H}
\eea
}

\end{appendices}

\end{document}